\journalname{Annals of the Institute of Statistical Mathematics}
\def\ve#1{\mathchoice{\mbox{\boldmath$\displaystyle\bf#1$}}
{\mbox{\boldmath$\textstyle\bf#1$}}
{\mbox{\boldmath$\scriptstyle\bf#1$}}
{\mbox{\boldmath$\scriptscriptstyle\bf#1$}}}
\title{Distributions of topological tree metrics between a species tree and a gene
tree}
\titlerunning{Distributions of topological tree metrics}
\author{Jing Xi \and  Jin Xie  \and Ruriko Yoshida}
\authorrunning{Xi, Xie, Yoshida}
\institute{J. Xi \at Department of Mathematics\\  
North Carolina State University\\ 
2108 SAS Hall, 2311 Stinson   Drive\\ 
Raleigh, NC 27695, USA\\
              \email{jxi2@ncsu.edu}           
           \and
                      J. Xie \at
Statistics Department\\
              University of Kentucky\\
Multidisplinary Science Building\\
Lexington, KY 40506-0082, USA\\
\email{jin.xie@uky.edu}
           \and
                      R. Yoshida \at
Correnponding Author\\
Statistics Department\\
              University of Kentucky\\
325D Multidisplinary Science Building\\
Lexington, KY 40506-0082, USA\\
\email{ruriko.yoshida@uky.edu}
}
\begin{document}

\def\ve#1{\mathchoice{\mbox{\boldmath$\displaystyle\bf#1$}}
{\mbox{\boldmath$\textstyle\bf#1$}}
{\mbox{\boldmath$\scriptstyle\bf#1$}}
{\mbox{\boldmath$\scriptscriptstyle\bf#1$}}}


\newcommand{\Tn}{\mathcal T_n}
\newcommand{\Q}{{\mathbb Q}}
\newcommand{\Z}{{\mathbb Z}}
\newcommand{\N}{{\mathbb N}}
\newcommand{\X}{{\mathbb X}}
\newcommand{\C}{{\mathbb C}}
\newcommand{\R}{\mathbb R}
\newcommand{\E}{\mathbb E}
\newcommand{\V}{\mathbb V}
\newcommand{\ttt}{\mathbf{T}}
\pagestyle{headings}
\setcounter{page}{1}
\date{Received: date / Revised: date}
\maketitle                 

\begin{abstract}
In order to conduct a statistical analysis on a given set of
phylogenetic gene trees, we often use a
distance measure between two trees.  In a statistical distance-based method to analyze
discordance between gene trees, it is a key to decide ``biologically 
meaningful'' and ``statistically well-distributed'' distance between
trees.  Thus, in this paper, we study the distributions of the three
tree distance metrics: the edge difference, the path difference, and the
precise $K$ interval cospeciation distance, between two trees: First,
we focus on distributions of the three tree distances between two
random unrooted trees with $n$ leaves ($n \geq 4$); and then we
focus on the distributions the three tree distances between a fixed
rooted species tree with $n$ leaves and a random gene tree with $n$
leaves generated under the coalescent process with the given species
tree. We show some theoretical results as well as simulation study on
these distributions.   
\end{abstract}
Key Words: Coalescent, Phylogenetics, Tree metrics, Tree topologies.

\section{Introduction}

A central issue in systematic biology is the reconstruction of
populations and species from numerous gene trees with varying levels
of discordance \citep{Brito2009,Edwards2009}. While there has been a
well-established understanding of the discordant phylogenetic
relationships that can exist among independent gene trees drawn from a
common species tree
\citep{Pamilo1988,Takahata1989,Maddison1997,Bollback2009},
phylogenetic studies have only recently begun to shift away from
single gene or concatenated gene estimates of phylogeny towards these
multi-locus approaches
(e.g. \citep{Carling2008,Yu2011b,Betancur2013,Heled2013,Thompson2013}).
In order to conduct a statistical analysis on the given set of gene
trees, we vectorize each tree, i.e., 
converting them into a numerical vector format based on a {\em distance
matrix} or {\em dissimilarity
map}.  These vectorized trees can then be analyzed
as points in a multi-dimensional space where the {\em distance between trees}
increases as they become more dissimilar \cite[]{Hillis:2005xc, Semple2003,
Graham:2010qa}.  Such statistical
applications that test for incongruence or congruence between two trees using a
measurement of dissimilarity between a pair of trees are
called {\em distance-based methods} (for example, 
\cite{Holmes:2007mb, Arnaoudova:2010uq,kdetrees} are such statistical
methods).   In a statistical distance-based method to analyze
discordance between gene trees, it is a key to decide ``biological
meaningful'' and ``statistically well-distributed'' distance between
trees \cite[]{Steel1993, Coons}.  Therefore we have studied the
distributions of some well-known tree distances between trees.  
In this paper we focus on three topological tree distances {\em edge
  difference distance} \cite[]{Clifford1971}, and {\em 
  precise $k$-Interval Cospeciation ($K$-IC) distance} \cite[]{cophy}, and the {\em
  path difference} \cite[]{Steel1993} while the distributions of {\em
  Robinson--Foulds} (RF)
distances \cite[]{Robinson1981} and {\em quartet distances}
\cite[]{Pedersen2001} between random trees are very
well studied (for example, \cite{Steel1993}).  

Here we have conducted simulation studies on these distributions 
and we have shown theoretical results on the distributions of these tree
distances between the {\em species
  tree} and {\em gene trees} which are generated under the {\em
  coalescent process} \cite[]{coal}. 

For the precise $K$-IC distance between two random trees, \cite{Coons}
showed that if we take the random trees and 
compute the distance between them and if we send the number of
leaves $n$ of the trees to infinity, then the probability that the
distance between two random trees becomes the worst possible distance,
that is  $(n - 3)$, goes to zero while the probability that the  
RF distance between two random trees becomes the worse possible, that
is $2n - 6$, goes to one (Theorem 8 in \cite{Coons}).  This proporty
is very important to have in terms of applying statistical analysis on
the distances of trees.  
In addition, \cite{Steel1993} showed some simulation study as well as some
theoretical study on the distributions of the RF distance, Quartet
distance and path difference 
distance between random trees with $n = 12$ leaves (see Figure 6 on
\cite{Steel1993}).  
A key ingredient of analyzing distributions of these three tree
distances between two random trees with $n$ leaves is a simple
observation that the precise $K$-IC distance between trees is
$l_{\infty}$ norm of two vectorized trees, the path difference
distance is $l_{2}$ norm of two vectorized trees, and the edge
difference distance is $l_{1}$ norm of two vectorized trees. 
First, in this paper, we will show some theoretical results comparing distributions of
these tree  distances between random trees with $n$ leaves.  

A coalescent process is often used to model gene trees given a fixed
species tree with $n$ 
leaves.   These 
theoretical developments have been used to reconstruct species trees
from samples of estimated gene trees in practice 
\cite[]{Maddison2006,Carstens2007,Edwards2007,Mossel2010,
  RoyChoudhury2008}.  
\cite{Rosenberg2002} studied the distribution
of the topological concordance of gene trees and species trees under
the coalescent process, \cite{Rosenberg2003} worked on the
distributions of monophyly, paraphyly, and polyphyly in a coalescent
model, and  \cite{Degnan2005} studied the distribution of gene trees
under the coalescent process.  
In this paper we focus on the distributions of the edge difference,
path difference, and precise $K$-IC distances between the fixed
species tree and gene trees generated under the coalescent process.   

This paper is organized as follows.  In Section \ref{notation} we
remind readers some definitions.  In
Section \ref{random_unrooted},
we focus on the distributions of these three tree distances between
two unrooted random trees.  More specifically, in Subsection
\ref{path}, we will show the variance of the distribution of 
the path difference distance between two random trees with $n$
leaves.  In Subsection \ref{edge} and \ref{kic_sec} we will compare
the means of the distributions of the edge difference and precise
$K$-IC distances between random trees with the mean of the
distribution on the path difference distance between them.  In Section
\ref{sp_gene}, we focus on the distributions of these three different
tree distances between a fixed species tree and a gene tree generated
from the coalescent process with the species tree.  Especially we
have computed explicitly the probability that the distribution of any of the three
tree distances between a fixed species tree and a gene tree generated
under the coalescent process. In Section \ref{simulations}, we have
shown several simulation studies on the distributions of the three
different tree distributions between random trees as well as between a
fixed species tree and a gene tree generated from the coalescent.  We
end with discussions in Section \ref{dis}.

\section{Basics and notation}\label{notation}

In the subsequent descriptions, let $n$ be the number of leaves 
(terminal taxa) in a tree. Let $\Tn$  be the space of all possible
unrooted trees
on $n$ taxa and let $\Tn'$ be the space of all possible rooted trees
on $n$ taxa. In this paper we consider only tree metrics between two
trees using topological information of the trees, i.e., this tree
space does not incorporate
branch length information.  We use $||\cdot||_p$ to represent the usual
$l_p$ norm of a vector, and $|\cdot|$ to indicate the cardinality
of a set. A tree distance is a function, $d: \Tn \times \Tn \to \R^+$
that has, at a minimum, the properties $d(r,s)=d(s,r)$ and
$d(t,t)=0$. Many of the methods also require a vectorization function,
$v : \Tn \to \R^m$, for some $m$, which maps phylogenetic trees into
Euclidean space. The symmetric difference between two sets is defined
as $A\ominus B := (A\backslash B) \cup (B\backslash A)$.

  Several popular tree distances are squared Euclidean distances as will
  be demonstrated below.  

The {\em dissimilarity map} or {\em distance matrix} of a tree $T$ is a $n\times n$
symmetric matrix of non-negative real
numbers, with zero diagonals and off diagonal elements corresponding to the sum of the branch lengths
between pairs of leaves in the tree.  

Suppose $v:\Tn \to \Z^{n\choose 2}$ is a function such that  the
$(i, j)$th coordinate, where $1 \leq i < j \leq n$, of the $v(T)$ is
the number of edges on the unique path  between leaves $i$ and $j$ on
$T$.  

\begin{figure}
\begin{center}
\includegraphics[width=4in]{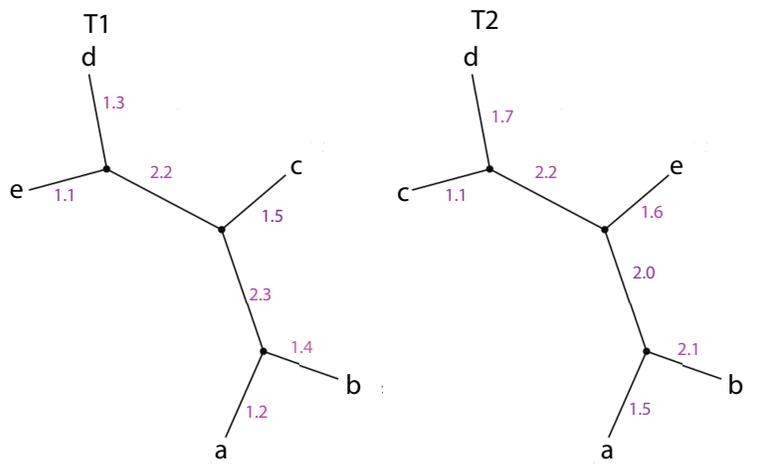}
\end{center}
\caption{Example phylogenetic trees: $T_1$ and $T_2$. The trees
  represent proposed most recent common ancestor relationships between
  5 taxa, labeled $a$ through $e$. These trees have branch lengths
  specified, but not all trees need have such information.}

\label{twotrees1}
\end{figure}

\subsection{Path difference}
The RF distance is completely determined by
the topologies of the trees, ignoring any edge lengths that may be
present. Conversely, the dissimilarity map distance requires
that the edge lengths be defined. The \emph{path difference} distance
$d_P$ is a distance analogous to the dissimilarity map, but which does
not require edge length information.

The calculation of the path difference is identical to the
dissimilarity map, except that elements in the distance matrix $D(T)$
are determined by counting the number of edges between the leaves,
rather than summing the edge lengths. (This is equivalent to the
dissimilarity map distance with all edge lengths in the tree set equal to $1$.)
The path difference is studied and compared with the RF distances by
\cite{Steel1993}.

Using the lexicographical ordering in the coordinates of the vector, we find
that the path difference vectorizations of our example trees are
\[
\begin{array}{ccc}
v(T_1)& =& (2,3,4,4,3,4,4,3,3,2), \\
 v(T_2)& = &(2,4,4,3,4,4,3,2,3,3). \\
\end{array}
\]
The path difference is therefore,
$ d_{p}(T_1,T_2) = || v(T_1) - v(T_2) ||_2 = \sqrt{6}$. 

\subsection{Edge difference}

This tree metric between two trees is defined by \cite{Clifford1971}.
Suppose we have two trees $T_1, \, T_2 \in  \Tn$.  Then the {\em edge
  difference} $d_e$ is a distance measure between two trees $T_1, \, T_2 \in
\Tn$ such that 
\[
d_e(T_1, T_2) = ||v(T_1) - v(T_2)||_1.
\]

 The edge vectorization of any tree is exactly the same as
 the path difference vectorizations of the tree.  
The edge difference is therefore,
$ d_{e}(T_1,T_2) = || v(T_1) - v(T_2) ||_{1} = 6.$

\subsection{Precise $k$-interval cospeciaion}
The precise $k$-interval cospeciaion ($k$-IC) distance
$d_k$ is also a distance analogous to the path difference distance, but
which uses $l_{\infty}$ norm instead of $l_2$ norm.  This tree metric
was defined by \cite{cophy}.

The precise $k$-IC vectorization of any tree is exactly the same as
the path difference vectorizations of the tree.  
The precise $k$-IC is therefore,
$ d_{k}(T_1,T_2) = || v(T_1) - v(T_2) ||_{\infty} = 1.$

Using the definitions of the tree differences $d_e, \, d_p, \, d_k$
between any two trees $T_1, \, T_2 \in  \Tn$  we can immediately have
the following remarks. 

\begin{remark}\label{rm:np}
\begin{itemize}
\item The tree differences $d_e, \, d_p, \, d_k$ between any two trees $T_1, \,
T_2 \in  \Tn$ are tree metrics.

\item The tree differences $d_e, \, d_p, \, d_k$ between any two trees $T_1, \,
T_2 \in  \Tn$  can be computed in $O(n^2)$.
\item Many tree metrics such as Nearest-Neighbor-Interchange distance,
  Subtree-Prune-and-Regraft distance, and
  Tree-Bisection-and-Regrafting distance are NP-hard \cite{Dasgupta97oncomputing,Hickey2008,Allen2001a}.
\end{itemize}
\end{remark}

\section{Distributions of the three tree metrics between
  unrooted random trees}\label{random_unrooted}
In this section we focus on the distributions of the path difference,
edge difference and precise $K$-IC distances between unrooted random
trees from $\Tn$. 

\subsection{Distribution of path difference metric between two
  trees}\label{path} 

Suppose we sampled trees from the uniform distribution over $\Tn$.  In
this section we consider the distribution of the path difference tree
metric $d_p$ between two random trees sampled uniformly from $\Tn$.

Recall that $b(n)$ is the number of binary trees with $n$ labeled leaves. Then we have the following theorems.
\begin{theorem}[Theorem 3 from \cite{Steel1993}]\label{sd}
Consider the distribution of $d_p^2$ under the uniform distribution over
$\Tn$.  Let $d_{ij}(T)$ for $T \in \Tn$ be the number of edges
on the unique path between a leaf $i$ to a leaf $j$.  Then,
\begin{equation}
\begin{array}{cll}
\mathbf{E} [d_{ij}(T)]& =& \alpha(n),\\
\mathbf{V}[d_{ij}(T)] & = & 4n-6-\alpha(n)-\alpha^2(n),\\
\end{array} 
\end{equation}
where $\alpha(n+2) = \frac{2^{2^n}}{{2n \choose n}}$ and
\begin{equation}
\mu_p(n) = 2 {n \choose 2} \mathbf{V}[d_{ij}(T)]
\end{equation}
where $\mu_p(n)$ is the expected value of $d_p^2$ under the uniform distribution over
$\Tn$.
\end{theorem}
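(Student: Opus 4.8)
The plan is to generate the uniform random tree by the standard sequential leaf-insertion procedure and to track the path length between two fixed leaves as a Markov chain. Recall that every unrooted binary tree on $m$ leaves has exactly $2m-3$ edges, and that one obtains each of the $(2n-5)!!$ trees in $\Tn$ exactly once by starting from the unique tree on $\{1,2,3\}$ and, for $k=4,\ldots,n$, subdividing one of the current $2k-5$ edges to attach leaf $k$; choosing each edge with equal probability at every step therefore yields the uniform distribution on $\Tn$. Fix the two leaves $1$ and $2$ (any pair is equivalent by the symmetry of the uniform law) and let $P^{(k)}$ denote the number of edges on the path between them after leaf $k$ has been inserted, so that $P^{(3)}=2$ and $d_{ij}(T)$ has the law of $P^{(n)}$. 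The key observation is that inserting leaf $k$ lengthens the path by exactly $1$ when the subdivided edge lies on the current path and leaves it unchanged otherwise; hence $P^{(k)} = P^{(k-1)} + B_k$, where, conditioned on the tree on $k-1$ leaves, $B_k$ is Bernoulli with success probability $P^{(k-1)}/(2k-5)$.

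From this recursion I would first compute the mean. Writing $a_k = \mathbf{E}[P^{(k)}]$ and taking expectations gives $a_k = a_{k-1}\bigl(1 + 1/(2k-5)\bigr) = a_{k-1}\,(2k-4)/(2k-5)$ with $a_3 = 2$; the resulting telescoping product simplifies to the stated closed form $a_n = 2^{2n-4}/{2n-4 \choose n-2} = \alpha(n)$, establishing the first identity. For the variance I would avoid computing the second moment directly and instead track the factorial-type quantity $t_k = \mathbf{E}[P^{(k)}(P^{(k)}+1)]$. Using $B_k^2 = B_k$ together with the conditional success probability, the inhomogeneous recursion for $\mathbf{E}[(P^{(k)})^2]$ combines with the one for $a_k$ so that the awkward lower-order terms cancel and $t_k = t_{k-1}\,(2k-3)/(2k-5)$, with $t_3 = 6$. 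This product telescopes cleanly to $t_n = 6\,(2n-3)/3 = 4n-6$, so that $\mathbf{E}[(P^{(n)})^2] = 4n-6-\alpha(n)$ and therefore $\mathbf{V}[d_{ij}(T)] = \mathbf{E}[(P^{(n)})^2] - \alpha(n)^2 = 4n-6-\alpha(n)-\alpha^2(n)$, which is the second identity.

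Finally, for $\mu_p(n)$ I would sample two independent uniform trees $T_1,T_2$ and expand the squared $\ell_2$ distance coordinatewise: $d_p^2(T_1,T_2) = \sum_{i<j}\bigl(d_{ij}(T_1)-d_{ij}(T_2)\bigr)^2$. Because $T_1$ and $T_2$ are independent and identically distributed, each summand has mean $\mathbf{E}[(d_{ij}(T_1)-d_{ij}(T_2))^2] = 2\,\mathbf{V}[d_{ij}(T)]$, and by the exchangeability of the leaf labels this variance is the same for all ${n \choose 2}$ pairs. Summing then gives $\mu_p(n) = 2{n \choose 2}\,\mathbf{V}[d_{ij}(T)]$, completing the argument.

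I expect the main obstacle to be the variance computation: handled naively, solving the inhomogeneous linear recursion for $\mathbf{E}[(P^{(k)})^2]$ and subsequently subtracting $a_n^2$ produces messy intermediate expressions. The decisive simplification is to work with $t_k = \mathbf{E}[P^{(k)}(P^{(k)}+1)]$, whose recursion is homogeneous and telescopes to the strikingly simple value $4n-6$; recognizing this combination is the one genuinely clever step, whereas the mean and the reduction for $\mu_p(n)$ are routine.
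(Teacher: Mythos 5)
Your proof is correct, but it is both more self-contained and differently organized than the paper's. The paper proves only the last identity, $\mu_p(n) = 2\binom{n}{2}\mathbf{V}[d_{ij}(T)]$, and for that part your argument is essentially theirs: you expand $d_p^2$ coordinatewise over the $\binom{n}{2}$ pairs and use independence plus exchangeability of leaf labels, packaged cleanly via $\mathbf{E}[(X-Y)^2]=2\mathbf{V}[X]$ for i.i.d.\ $X,Y$, where the paper instead writes out the double sums $\frac{1}{b(n)^2}\sum_{T,T'}(\cdots)$ explicitly and recognizes the variance at the end. Where you genuinely diverge is on the first two identities: the paper imports $\mathbf{E}[d_{ij}(T)]=\alpha(n)$ and $\mathbf{V}[d_{ij}(T)]=4n-6-\alpha(n)-\alpha^2(n)$ wholesale from \cite{Steel1993}, whereas you derive them from scratch using the uniform leaf-insertion chain $P^{(k)}=P^{(k-1)}+B_k$ with conditional success probability $P^{(k-1)}/(2k-5)$. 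I checked your recursions: the mean satisfies $a_k=a_{k-1}(2k-4)/(2k-5)$ and telescopes to $2^{2n-4}/\binom{2n-4}{n-2}$, and your quantity $t_k=\mathbf{E}\bigl[P^{(k)}(P^{(k)}+1)\bigr]$ does satisfy the homogeneous recursion $t_k=t_{k-1}(2k-3)/(2k-5)$, since the increment terms combine to $2P^{(k-1)}(P^{(k-1)}+1)/(2k-5)$, giving $t_n=4n-6$ and hence the stated variance. Your route buys a complete, citation-free proof of the hardest ingredients (and incidentally confirms that the paper's displayed $\alpha(n+2)=2^{2^n}/\binom{2n}{n}$ is a typo for $2^{2n}/\binom{2n}{n}$, since your closed form for $a_n$ matches the latter), at the cost of length; the paper's proof is shorter precisely because it delegates those two formulas to the reference and only verifies the reduction of $\mu_p(n)$ to the single-pair variance.
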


\begin{proof}
In this paper we only show the proof for $\mu_p(n)$.  The rest of the
proof for this theorem see \cite{Steel1993}.
By definition of $d_p^2$ we have:
\[
d_p^2(T,T') = \parallel  d(T) - d(T')  \parallel_2^2 = \sum\limits_{i<j} [d_{ij}(T) - d_{ij}(T')]^2,
\]
where $T$ and $T'$ are two random binary trees. So the mean is:
\begin{eqnarray}
\mu_p(n) & = & \mathbb{E} [d_p^2(T,T')] = \sum\limits_{T,T'} \Pr(T) \Pr(T') d_p^2(T,T') \nonumber\\
& = & \sum\limits_{T,T'} \frac{1}{b(n)^2} \sum\limits_{i<j} [d_{ij}(T) - d_{ij}(T')]^2 \nonumber\\
& = & \frac{1}{b(n)^2} \sum\limits_{T,T'} \sum\limits_{i<j} [d_{ij}(T)^2 + d_{ij}(T')^2 - 2 d_{ij}(T) d_{ij}(T')] \nonumber \\
& = & \frac{1}{b(n)^2} \sum\limits_{i<j} \left[\sum\limits_{T,T'} d_{ij}(T)^2 + \sum\limits_{T,T'} d_{ij}(T')^2 -2 \sum\limits_{T,T'} d_{ij}(T) d_{ij}(T') \right]  \nonumber \\
& = & \frac{1}{b(n)^2} \sum\limits_{i<j} \left[\sum\limits_{T'} \left(\sum\limits_{T} d_{ij}(T)^2\right) + \sum\limits_{T} \left(\sum\limits_{T'} d_{ij}(T')^2\right)  -2 \sum\limits_{T} d_{ij}(T) \left(\sum\limits_{T'}d_{ij}(T')\right) \right]  \nonumber \\
& = & \frac{1}{b(n)^2} \sum\limits_{i<j} \left[2 b(n) \sum\limits_{T} d_{ij}(T)^2  -2 \left(\sum\limits_{T}d_{ij}(T)\right)^2 \right].  \nonumber 
\end{eqnarray}
Notice that $\sum\limits_T f(d_{ij}(T))$ does not depend the selection of $i$ and $j$ because of the symmetry of labeling (it is easy to prove by contradiction and switching the labels). Therefore $\sum\limits_T f(d_{ij}(T)) = \sum\limits_T f(d_{kl}(T))$ with $i<j$, $k<l$, and thus we have:
\begin{eqnarray}
\mu_p(n) & = & \frac{2}{b(n)^2} \left( \begin{array}{c} n \\ 2 \end{array} \right)\left[ b(n) \sum\limits_{T} d_{ij}(T)^2  - \left(\sum\limits_{T}d_{ij}(T)\right)^2 \right]  \nonumber \\
& = & 2 \left( \begin{array}{c} n \\ 2 \end{array} \right)\left[ \sum\limits_{T} \frac{d_{ij}(T)^2}{b(n)}  - \left(\sum\limits_{T} \frac{d_{ij}(T)}{b(n)}\right)^2 \right]  \nonumber \\
& = & 2 \left( \begin{array}{c} n \\ 2 \end{array} \right)\left[ \sum\limits_{T} d_{ij}(T)^2 \Pr(T)  - \left(\sum\limits_{T} d_{ij}(T)\Pr(T)\right)^2 \right]  \nonumber \\
& = & 2 \left( \begin{array}{c} n \\ 2 \end{array} \right) \left( \mathbb E [d_{ij}(T)^2] - \mathbb E [d_{ij}(T)]^2 \right) = 2 \left( \begin{array}{c} n \\ 2 \end{array} \right) Var(d_{ij}(T)) \nonumber
\end{eqnarray}
with any selection of $i$ and $j$. 
\end{proof}

\begin{theorem}
$\sigma_p^2(n)$, the variance of $d_p^2$, is 
\[
\begin{array}{rcl}
\sigma_p^2(n) &=& \frac{1}{b(n)^2}  \left\{ \begin{split}
& \sum\limits_{T,T'} \left[ \sum\limits_{i<j} d_{ij}(T)^2\right]^2 + \sum\limits_{T,T'} \left[\sum\limits_{i<j} d_{ij}(T')^2\right]^2 + 4 \sum\limits_{T,T'} \left[\sum\limits_{i<j} d_{ij}(T) d_{ij}(T')\right]^2\\
+ & 2 \left\{\left( \begin{array}{c} n \\ 2 \end{array} \right) b(n) [4n-6-\alpha(n)]\right\}^2 \\
- & 8 b(n) \alpha(n)\sum\limits_{T} \left[ \sum\limits_{i<j} d_{ij}(T)^2\right]\left[\sum\limits_{i<j} d_{ij}(T)\right] 
\end{split}\right\} \\
&  & - 4\left[ {n \choose 2} \mathbf{V}[d_{ij}(T)]\right]^2.\\
\end{array}
\]
\end{theorem}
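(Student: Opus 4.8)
The plan is to start from the definition of variance, $\sigma_p^2(n) = \mathbb{E}[(d_p^2(T,T'))^2] - (\mathbb{E}[d_p^2(T,T')])^2$, where $T, T'$ are independent draws from the uniform distribution over $\Tn$. The subtracted term is already known: by Theorem~\ref{sd} we have $\mathbb{E}[d_p^2(T,T')] = \mu_p(n) = 2\binom{n}{2}\mathbf{V}[d_{ij}(T)]$, and squaring it produces exactly the final term $4[\binom{n}{2}\mathbf{V}[d_{ij}(T)]]^2$ appearing in the statement. So the whole task reduces to computing the fourth moment $\mathbb{E}[(d_p^2)^2]$.

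To organize this, I would write $d_p^2(T,T') = A + B - 2C$, where $A := \sum_{i<j}d_{ij}(T)^2$ depends only on $T$, $B := \sum_{i<j}d_{ij}(T')^2$ depends only on $T'$, and $C := \sum_{i<j}d_{ij}(T)d_{ij}(T')$ depends on both. Squaring gives $(d_p^2)^2 = A^2 + B^2 + 4C^2 + 2AB - 4AC - 4BC$, and I would take the expectation term by term, using the independence of $T$ and $T'$ together with the label-symmetry argument already invoked in the proof of Theorem~\ref{sd}. The six resulting expectations split into two groups. The three terms $\mathbb{E}[A^2]$, $\mathbb{E}[B^2]$, $\mathbb{E}[C^2]$ do not collapse to closed form, and are therefore reported as the double sums $\frac{1}{b(n)^2}\sum_{T,T'}[\ldots]^2$ in the statement; rewriting a single-tree average $\frac{1}{b(n)}\sum_T[\ldots]^2$ as a double sum over $(T,T')$ simply multiplies and divides by $b(n)$, since the summand does not depend on the extra tree.

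The remaining three terms simplify completely. Because $A$ and $B$ are independent and identically distributed, $\mathbb{E}[AB] = \mathbb{E}[A]^2$ with $\mathbb{E}[A] = \binom{n}{2}\mathbb{E}[d_{ij}(T)^2]$; and from Theorem~\ref{sd} one extracts $\mathbb{E}[d_{ij}(T)^2] = \mathbf{V}[d_{ij}(T)] + \alpha^2(n) = 4n-6-\alpha(n)$, which yields the term $2\{\binom{n}{2}b(n)[4n-6-\alpha(n)]\}^2$ after clearing the $1/b(n)^2$ prefix. For the two cross terms, integrating over the tree that appears only once replaces each such factor $d_{ij}$ by its mean $\alpha(n)$, so that $\mathbb{E}[AC] = \mathbb{E}[BC] = \alpha(n)\,\mathbb{E}_T\big[(\sum_{i<j}d_{ij}(T)^2)(\sum_{i<j}d_{ij}(T))\big]$ by identical distribution; hence $-4\mathbb{E}[AC]-4\mathbb{E}[BC] = -8\mathbb{E}[AC]$, giving the mixed term $-8b(n)\alpha(n)\sum_{T}[\sum_{i<j}d_{ij}(T)^2][\sum_{i<j}d_{ij}(T)]$.

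The main difficulty here is bookkeeping rather than any single hard estimate. One must correctly identify which of the six expanded terms factor under independence (the $AB$ term), which reduce via the first and second moments supplied by Theorem~\ref{sd} (the $AB$, $AC$, and $BC$ terms), and which genuinely require joint moments of products $d_{ij}(T)d_{kl}(T)$ over a single random tree and so have no known closed form (the $A^2$, $B^2$, $C^2$ terms, left as explicit sums). Equally important is maintaining a consistent $1/b(n)^2$ normalization throughout when passing between single-tree averages and the double sums over $(T,T')$ used in the statement, and invoking label symmetry to guarantee that the per-coordinate quantities are independent of the chosen pair $(i,j)$.
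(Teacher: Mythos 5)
Your proposal is correct and follows essentially the same route as the paper's own proof: both write $d_p^2(T,T') = A + B - 2C$ with $A=\sum_{i<j}d_{ij}(T)^2$, $B=\sum_{i<j}d_{ij}(T')^2$, $C=\sum_{i<j}d_{ij}(T)d_{ij}(T')$, expand the square, merge the two cross terms into $-8\,\mathbb{E}[AC]$ by exchangeability, and then use independence plus label symmetry together with $\mathbb{E}[d_{ij}(T)^2]=4n-6-\alpha(n)$ to collapse the $AB$ and $AC$ terms, leaving $\mathbb{E}[A^2]$, $\mathbb{E}[B^2]$, $\mathbb{E}[C^2]$ as explicit sums. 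The bookkeeping of the $1/b(n)^2$ normalization in your write-up matches the form of the stated result.
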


\begin{proof}

Since $\sigma_p^2(n) = Var (d_p^2) = \mathbb E [d_p^4] - \mu_p(n)^2$,
where the explicit formula of $\mu_p(n)$ is known, we have to consider
only $\mathbb E [d_p^4]$:
\begin{eqnarray}
\mathbb E [d_p^4(T,T')] & = & \sum\limits_{T,T'} \Pr(T) \Pr(T') [d_p^2(T,T')]^2 \nonumber\\
& = & \sum\limits_{T,T'} \frac{1}{b(n)^2} \left(\sum\limits_{i<j} [d_{ij}(T) - d_{ij}(T')]^2 \right)^2 \nonumber\\
& = & \frac{1}{b(n)^2} \sum\limits_{T,T'} \left[\sum\limits_{i<j} d_{ij}(T)^2 + \sum\limits_{i<j} d_{ij}(T')^2 - 2 \sum\limits_{i<j} d_{ij}(T) d_{ij}(T')\right]^2 \nonumber \\
& = & \frac{1}{b(n)^2}  \left\{ \begin{split}
& \sum\limits_{T,T'} \left[ \sum\limits_{i<j} d_{ij}(T)^2\right]^2 + \sum\limits_{T,T'} \left[\sum\limits_{i<j} d_{ij}(T')^2\right]^2 + 4 \sum\limits_{T,T'} \left[\sum\limits_{i<j} d_{ij}(T) d_{ij}(T')\right]^2\\
+ & 2 \sum\limits_{T,T'} \left[ \sum\limits_{i<j} d_{ij}(T)^2\right]\left[\sum\limits_{i<j} d_{ij}(T')^2\right] \\
- & 4 \sum\limits_{T,T'} \left[ \sum\limits_{i<j} d_{ij}(T)^2\right]\left[\sum\limits_{i<j} d_{ij}(T)d_{ij}(T')\right] \\
- & 4 \sum\limits_{T,T'} \left[ \sum\limits_{i<j} d_{ij}(T')^2\right]\left[\sum\limits_{i<j} d_{ij}(T)d_{ij}(T')\right]
\end{split}\right\} \nonumber \\
& = & \frac{1}{b(n)^2}  \left\{ \begin{split}
& \sum\limits_{T,T'} \left[ \sum\limits_{i<j} d_{ij}(T)^2\right]^2 + \sum\limits_{T,T'} \left[\sum\limits_{i<j} d_{ij}(T')^2\right]^2 + 4 \sum\limits_{T,T'} \left[\sum\limits_{i<j} d_{ij}(T) d_{ij}(T')\right]^2\\
+ & 2 \sum\limits_{T,T'} \left[ \sum\limits_{i<j} d_{ij}(T)^2\right]\left[\sum\limits_{i<j} d_{ij}(T')^2\right] \\
- & 8 \sum\limits_{T,T'} \left[ \sum\limits_{i<j} d_{ij}(T)^2\right]\left[\sum\limits_{i<j} d_{ij}(T)d_{ij}(T')\right] 
\end{split}\right\}. \nonumber 
\end{eqnarray}
In this equation, two terms can be simplified as:
\begin{eqnarray}
\sum\limits_{T,T'} \left[ \sum\limits_{i<j} d_{ij}(T)^2\right]\left[\sum\limits_{i<j} d_{ij}(T')^2\right]  & = & \left[\sum\limits_{T}  \sum\limits_{i<j} d_{ij}(T)^2\right]\left[\sum\limits_{T'} \sum\limits_{i<j} d_{ij}(T')^2\right] \nonumber \\
& = &  \left[\left( \begin{array}{c} n \\ 2 \end{array} \right) \sum\limits_{T}  d_{ij}(T)^2\right]^2 \nonumber \\
& = &  \left\{\left( \begin{array}{c} n \\ 2 \end{array} \right) b(n) \mathbb E [d_{ij}(T)^2]\right\}^2 \nonumber \\
& = & \left\{\left( \begin{array}{c} n \\ 2 \end{array} \right) b(n) [4n-6-\alpha(n)]\right\}^2. \nonumber
\end{eqnarray}

\begin{eqnarray}
\sum\limits_{T,T'} \left[ \sum\limits_{i<j} d_{ij}(T)^2\right]\left[\sum\limits_{i<j} d_{ij}(T)d_{ij}(T')\right]  & = & \sum\limits_{T} \left[ \sum\limits_{i<j} d_{ij}(T)^2\right]\left[\sum\limits_{T'}\sum\limits_{i<j} d_{ij}(T)d_{ij}(T')\right] \nonumber \\
& = &\sum\limits_{T} \left[ \sum\limits_{i<j} d_{ij}(T)^2\right]\left[\sum\limits_{i<j} d_{ij}(T) b(n) \mathbb E [d_{ij}(T)]\right] \nonumber \\
& = & b(n) \alpha(n)\sum\limits_{T} \left[ \sum\limits_{i<j} d_{ij}(T)^2\right]\left[\sum\limits_{i<j} d_{ij}(T)\right]. \nonumber
\end{eqnarray}

\end{proof}

\subsection{Distribution of the edge difference metric between two trees}\label{edge}

\begin{theorem}\label{kic2}
Consider the distribution of $d_e$ under the uniform distribution over
$\Tn$. Then, using the relation between $l_p$ norm and $l_q$ norms
where $0 < q < p$ such that $||x||_p \leq ||x||_q \leq m^{(\frac{1}{q}
- \frac{1}{p})}$, we have the following theorem:
\begin{equation}
\sqrt{2 {n \choose 2} \left( 4n-6-\alpha(n)-\alpha^2(n)\right)} \leq \mu_e(n) \leq {n \choose 2}  \sqrt{2 \left( 4n-6-\alpha(n)-\alpha^2(n)\right)}
\end{equation}
where  $\mu_e(n)$ is the expected value of $d_e$ under the uniform distribution over
$\Tn$.
\end{theorem}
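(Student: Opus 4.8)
The plan is to apply the stated $l_q$--$l_p$ norm inequality with $q=1$ and $p=2$ coordinatewise to the difference vector, and then pass to expectations, using Theorem~\ref{sd} to supply the value of $\mu_p(n)$. Write $m=\binom{n}{2}$ and, for two trees $T,T'\in\Tn$, set $x=v(T)-v(T')\in\Z^{m}$. Because the edge and path vectorizations of a tree coincide, $d_e(T,T')=\|x\|_1$ and $d_p(T,T')=\|x\|_2$, and Theorem~\ref{sd} gives $\mu_p(n)=\mathbb{E}[d_p^2]=2\binom{n}{2}\,\mathbf{V}[d_{ij}(T)]$ with $\mathbf{V}[d_{ij}(T)]=4n-6-\alpha(n)-\alpha^2(n)$. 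Applied to each realized pair $(T,T')$ with $q=1$, $p=2$, the norm inequality reads
\[
d_p(T,T')\;\le\;d_e(T,T')\;\le\;\sqrt{m}\,d_p(T,T').
\]

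For the upper bound I would square the right-hand inequality and take expectations, obtaining $\mathbb{E}[d_e^2]\le m\,\mathbb{E}[d_p^2]=m\,\mu_p(n)$. Then, by the Cauchy--Schwarz inequality (equivalently, Jensen's inequality for the concave map $t\mapsto\sqrt{t}$),
\[
\mu_e(n)=\mathbb{E}[d_e]\;\le\;\sqrt{\mathbb{E}[d_e^2]}\;\le\;\sqrt{m\,\mu_p(n)}\;=\;\binom{n}{2}\sqrt{2\bigl(4n-6-\alpha(n)-\alpha^2(n)\bigr)},
\]
which is exactly the claimed upper bound. This direction is fully rigorous and needs no further input.

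The lower bound is where the real work lies. Squaring the left-hand inequality $d_p\le d_e$ and taking expectations gives $\mathbb{E}[d_e^2]\ge\mathbb{E}[d_p^2]=\mu_p(n)$, hence $\sqrt{\mathbb{E}[d_e^2]}\ge\sqrt{\mu_p(n)}$. The subtlety is that the assertion concerns $\mu_e(n)=\mathbb{E}[d_e]$, and here Jensen's inequality runs the wrong way, $\mathbb{E}[d_e]\le\sqrt{\mathbb{E}[d_e^2]}$, so controlling the root-mean-square of $d_e$ does not by itself control its mean; the pointwise bound $d_p\le d_e$ only yields $\mathbb{E}[d_e]\ge\mathbb{E}[d_p]$, and $\mathbb{E}[d_p]$ may be strictly smaller than $\sqrt{\mathbb{E}[d_p^2]}$. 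To close the gap I would exploit the exchangeability of the leaf pairs: by the labelling symmetry already used in Theorem~\ref{sd}, every coordinate of $x$ has the same law, so $\mu_e(n)=m\,\mathbb{E}|x_1|$ and $\mu_p(n)=m\,\mathbb{E}[x_1^2]$, and the desired inequality $\mu_e(n)\ge\sqrt{\mu_p(n)}$ reduces to the single-coordinate statement $m\,(\mathbb{E}|x_1|)^2\ge\mathbb{E}[x_1^2]$, where $x_1=d_{ij}(T)-d_{ij}(T')$ is a difference of two independent copies of a path length.

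I expect this last reduction to be the main obstacle, since it demands a genuine lower bound on the mean absolute deviation $\mathbb{E}|x_1|$, not merely a second moment. Here $\mathbb{E}[x_1^2]=2\,\mathbf{V}[d_{ij}(T)]=O(n)$, while $x_1$ has standard deviation of order $\sqrt{n}$, so $\mathbb{E}|x_1|$ is itself of order $\sqrt{n}$ and $m\,(\mathbb{E}|x_1|)^2$ exceeds $\mathbb{E}[x_1^2]$ by a factor of order $n^2$; the inequality therefore holds with a wide margin for every $n\ge 4$. To make this rigorous I would bound $\mathbb{E}|x_1|$ from below using the integrality of $x_1$ (so that $\mathbb{E}|x_1|\ge\Pr(x_1\neq 0)$) together with a lower bound on $\Pr\bigl(d_{ij}(T)\neq d_{ij}(T')\bigr)$, or alternatively a reverse moment inequality relating $\mathbb{E}|x_1|$ to $\mathbb{E}[x_1^2]$. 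Substituting $\mathbf{V}[d_{ij}(T)]=4n-6-\alpha(n)-\alpha^2(n)$ then recovers the stated lower bound $\sqrt{2\binom{n}{2}\bigl(4n-6-\alpha(n)-\alpha^2(n)\bigr)}$.
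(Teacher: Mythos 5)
Your upper-bound argument is correct and is exactly what the paper intends: the pointwise inequality $d_e \le \sqrt{m}\,d_p$ with $m=\binom{n}{2}$, followed by Cauchy--Schwarz to pass from $\E[d_p]$ to $\sqrt{\E[d_p^2]}=\sqrt{\mu_p(n)}$. You have also put your finger on a real weakness: the paper gives no proof of this theorem beyond the norm relation quoted in its statement, and, as you observe, that relation alone cannot deliver the lower bound, since $\mu_p(n)$ is the \emph{second} moment of $d_p$ and Jensen gives $\E[d_p]\le\sqrt{\E[d_p^2]}$, which points the wrong way. Your exchangeability reduction --- $\mu_e(n)=m\,\E|x_1|$, $\mu_p(n)=m\,\E[x_1^2]$, so the claim becomes $m\,(\E|x_1|)^2\ge\E[x_1^2]$ --- is also correct.

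The gap is that you never actually prove this last inequality. The inference ``$x_1$ has standard deviation of order $\sqrt{n}$, so $\E|x_1|$ is itself of order $\sqrt{n}$'' is invalid in general: a random variable taking values $\pm n$ with probability $1/(2n)$ each and $0$ otherwise has variance $n$ but mean absolute deviation $1$, so some structural property of $x_1$ must be invoked, and the two avenues you mention (integrality plus a lower bound on $\Pr(x_1\neq 0)$, or an unspecified ``reverse moment inequality'') are left unexecuted. The cleanest completion uses boundedness rather than integrality: path lengths in a binary tree on $n$ leaves lie in $\{2,\dots,n-1\}$, so $|x_1|\le n-3$, whence $\E[x_1^2]\le (n-3)\,\E|x_1|$, i.e.\ $\E|x_1|\ge \E[x_1^2]/(n-3)$. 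Substituting this into $m\,(\E|x_1|)^2$ shows the required inequality follows once $m\,\E[x_1^2]\ge (n-3)^2$, i.e.\ once
\[
n(n-1)\bigl(4n-6-\alpha(n)-\alpha^2(n)\bigr)\;\ge\;(n-3)^2,
\]
which can be verified directly from Theorem~\ref{sd}: for $n=4,5,6$ the left side equals roughly $8/3,\ 11.2,\ 29.1$ against $1,\ 4,\ 9$, and for $n\ge 7$ one has $\V[d_{ij}(T)]\ge 1$ while $(n-3)^2\le n(n-1)$ (asymptotically the left side grows like $(4-\pi)n^3$). With this step supplied, your argument is complete, and it in fact repairs a lower bound that the paper itself asserts without proof.
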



\begin{remark}
Let $B(x) = \sum_{n > 0} \frac{b(n+1)}{n!} x^n$ be an exponential
generating function for the number of planted binary trees, $b(n+1)$,
with $n$ labeled non-root leaves (or the number of rooted binary trees
with $n$ leaves).  Let 
\[
F(x, y) = y B(x) + y^2 B(x) + \ldots = \frac{1}{[1 - yB(x)]} - 1
\]
be the exponential generating function for the number of ordered
forests consisting of a given number of rooted trees (marked by $y$)
and a given number of leaves (marked by $x$).  Then for a fixed pair
of distinct leaves $i$ and $j$ (we can set $i = 1$ and $j = 2)$, we have
\[
\sum_{T \in \Tn} \sum_{T' \in \Tn}|d_{ij}(T) - d_{ij}(T')| =
\sum_{r = 2}^{n-1} [y^r][x^{n-2}] yF(x, y) \left(\sum_{r' = 2}^{n-1}
  |r - r'| [y^{r'}][x^{n-2}] y F(x, y) \right),   
\]
where
$[x^k][y^{k'}]f(x, y)$ denotes the coefficient of $x^k\cdot y^{k'}$ in the function $f(x, y)$.

\end{remark}

\subsection{Distribution of the precise $k$-IC tree metric between two trees}\label{kic_sec}
Now we consider the distribution of $d_k$ under the uniform distribution over
$\Tn$. Then, using the relation between $l_p$ norm and $l_q$ norms
where $0 < q < p$ such that $||x||_p \leq ||x||_q \leq m^{(\frac{1}{q}
- \frac{1}{p})}$, we have the following theorem:
\begin{theorem}\label{kic1}
Consider the distribution of $d_k$ under the uniform distribution over
$\Tn$. Then, 
\begin{equation}
\sqrt{2 \left( 4n-6-\alpha(n)-\alpha^2(n)\right)} \leq \mu_k(n) \leq \sqrt{2 {n \choose 2} \left( 4n-6-\alpha(n)-\alpha^2(n)\right)}
\end{equation}
where  $\mu_k(n)$ is the expected value of $d_k$ under the uniform distribution over
$\Tn$. 
\end{theorem}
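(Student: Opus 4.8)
The plan is to derive both bounds directly from the norm‑equivalence inequality quoted in the statement, fed by the second‑moment information for the path difference that Theorem~\ref{sd} already supplies. Write $m = {n \choose 2}$ for the ambient dimension, fix two trees $T, T' \in \Tn$, and set $x = v(T) - v(T') \in \Z^{m}$. Since $d_k(T,T') = \|x\|_\infty$ and $d_p(T,T') = \|x\|_2$, the cited inequality in the case $q = 2,\ p = \infty$ (so that $m^{1/q - 1/p} = \sqrt{m}$) reads $\|x\|_\infty \le \|x\|_2 \le \sqrt{m}\,\|x\|_\infty$, which specialises to the pointwise sandwich
\[
\frac{d_p(T,T')}{\sqrt{m}} \;\le\; d_k(T,T') \;\le\; d_p(T,T').
\]

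First I would square this chain and average over the independent uniform choices of $T$ and $T'$. By linearity of expectation this gives
\[
\frac{1}{m}\,\mathbf{E}[d_p^2] \;\le\; \mathbf{E}[d_k^2] \;\le\; \mathbf{E}[d_p^2],
\]
and Theorem~\ref{sd} evaluates the path‑difference second moment as $\mathbf{E}[d_p^2] = \mu_p(n) = 2{n \choose 2}\,\mathbf{V}[d_{ij}(T)] = 2{n \choose 2}\bigl(4n-6-\alpha(n)-\alpha^2(n)\bigr)$. Substituting this and dividing the left-hand side by $m = {n \choose 2}$ cancels the binomial factor, so after taking square roots the root‑mean‑square of $d_k$ is trapped between $\sqrt{2\bigl(4n-6-\alpha(n)-\alpha^2(n)\bigr)}$ and $\sqrt{2{n \choose 2}\bigl(4n-6-\alpha(n)-\alpha^2(n)\bigr)}$, which are precisely the two bounds asserted.

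To pass from this second moment to the mean $\mu_k(n) = \mathbf{E}[d_k]$ I would combine the pointwise sandwich with Jensen's inequality, and here the two sides behave asymmetrically. The upper bound is clean: $d_k \le d_p$ pointwise gives $\mathbf{E}[d_k] \le \mathbf{E}[d_p]$, and concavity of the square root yields $\mathbf{E}[d_p] \le \sqrt{\mathbf{E}[d_p^2]} = \sqrt{\mu_p(n)}$, delivering the right‑hand inequality for the literal mean. The lower bound is where the real difficulty lies and is the step I expect to be the main obstacle: the pointwise estimate only gives $\mathbf{E}[d_k] \ge \mathbf{E}[d_p]/\sqrt{m}$, and Jensen controls $\mathbf{E}[d_p]$ from \emph{above} rather than below, so this route bounds $\sqrt{\mathbf{E}[d_k^2]}$ but not $\mathbf{E}[d_k]$ itself. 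I would therefore either read $\mu_k(n)$ as the root‑mean‑square $\sqrt{\mathbf{E}[d_k^2]}$, for which the argument above makes the displayed bounds exact, or, to retain the literal mean, supply a reverse moment comparison — for instance a lower bound on $\mathbf{E}[d_p]$ in terms of $\sqrt{\mathbf{E}[d_p^2]}$ via a Paley--Zygmund type concentration estimate for the path‑difference entries $d_{ij}$ — in order to close the remaining gap between $\mathbf{E}[d_k]$ and $\sqrt{\mathbf{E}[d_k^2]}$.
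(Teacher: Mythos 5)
Your proposal follows essentially the same route the paper intends: the paper gives no explicit proof of Theorem~\ref{kic1} beyond the sentence invoking the norm equivalence $\|x\|_p \le \|x\|_q \le m^{1/q-1/p}\|x\|_p$ (note the paper's displayed version of this inequality drops the final factor $\|x\|_p$, a typo), and your pointwise sandwich $d_p/\sqrt{m} \le d_k \le d_p$ with $m = {n \choose 2}$, fed by $\mu_p(n) = \mathbf{E}[d_p^2] = 2{n \choose 2}\bigl(4n-6-\alpha(n)-\alpha^2(n)\bigr)$ from Theorem~\ref{sd}, is exactly that argument made explicit. Your second-moment sandwich $\mathbf{E}[d_p^2]/m \le \mathbf{E}[d_k^2] \le \mathbf{E}[d_p^2]$ is correct, and your derivation of the upper bound for the literal mean, $\mathbf{E}[d_k] \le \mathbf{E}[d_p] \le \sqrt{\mathbf{E}[d_p^2]}$, is the valid half of the theorem.

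The obstacle you flag on the lower bound is genuine, and it is a gap in the paper's theorem rather than a defect of your reasoning. Because $\mu_p(n)$ is the mean of $d_p^2$ and not of $d_p$, the claimed lower bound equals $\sqrt{\mathbf{E}[d_p^2]/m}$, while the norm sandwich only gives $\mathbf{E}[d_k] \ge \mathbf{E}[d_p]/\sqrt{m}$; closing the chain would require $\mathbf{E}[d_p] \ge \sqrt{\mathbf{E}[d_p^2]}$, which is the reverse of Jensen and holds only when $d_p$ is essentially deterministic. The paper silently identifies $\mathbf{E}[d_p]$ with $\sqrt{\mathbf{E}[d_p^2]}$ here; under the root-mean-square reading $\mu_k(n) = \sqrt{\mathbf{E}[d_k^2]}$ both displayed bounds become exact, precisely as you show, and the same issue (with the same fix) affects Theorem~\ref{kic2} and Remark~\ref{bounds}. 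Your Paley--Zygmund suggestion is a sensible way to rescue the literal-mean statement, but note it needs fourth-moment control of $d_p$ --- which is what the paper's $\sigma_p^2(n)$ computation could in principle supply --- and it would introduce a multiplicative constant, so it recovers the stated lower bound only up to that constant; nothing of this kind appears in the paper.
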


\begin{remark}\label{bounds}
Using the same relation above, we can use $\mu_k(n)$ as an upper
bound for $\sqrt{\mu_p(n)}$ and $\mu_e(n)$, that is
\[
\begin{array}{ccc}
\sqrt{\mu_p(n)} &\leq & \sqrt{{n \choose 2}} \mu_k(n)\\
\mu_e(n) &\leq & {n \choose 2}\mu_k(n).\\
\end{array}
\]
\end{remark}

\section{Species tree and gene tree under the coalescent}\label{sp_gene}

Let $\Tn'$ be the space of rooted trees with $n$ leaves.  Note that
$\Tn' = {\mathcal T}_{n+1}$.
In this section we consider the distances between a species tree and a
gene tree under the coalescent given the species tree.  
First we consider the following two lemmas from \cite[]{Coons}.
\begin{lemma}[Lemma 1 from \cite{Coons}]\label{coon1}
For any two trees $T_1, \, T_2 \in \Tn'$, $d_{k}(T_1, \, T_2) \leq
(n-2).$ 
\end{lemma}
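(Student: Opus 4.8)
The plan is to reduce the claim to a pointwise bound on the coordinates of the path-difference vectorization. By definition of the precise $k$-IC distance we have $d_k(T_1,T_2) = \|v(T_1)-v(T_2)\|_\infty = \max_{i<j}\, |d_{ij}(T_1)-d_{ij}(T_2)|$, so it suffices to exhibit a fixed interval $[L,U]$, independent of the tree, containing every coordinate $d_{ij}(T)$ for $T\in\Tn'$. If $L\le d_{ij}(T)\le U$ for all pairs and all trees, then each difference $|d_{ij}(T_1)-d_{ij}(T_2)|$ is at most $U-L$, and taking the maximum over pairs yields $d_k(T_1,T_2)\le U-L$. I claim the right interval is $[2,\,n]$, giving exactly $U-L=n-2$.

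The lower bound $d_{ij}(T)\ge 2$ is immediate: for distinct leaves $i\neq j$ their \emph{most recent common ancestor} $v$ is an internal node, and the unique $i$--$j$ path leaves $v$ through two distinct child edges, so it contains at least one edge on each side. For the upper bound $d_{ij}(T)\le n$ I would localize to the subtree rooted at $v$. Write $m_v$ for the number of leaves in that subtree, and let its two relevant child subtrees (the one containing $i$ and the one containing $j$) have $a$ and $b$ leaves respectively, so that $a+b\le m_v\le n$. The supporting fact I need is that in any rooted tree with $m$ leaves in which every internal node has at least two children, each leaf has depth at most $m-1$; this follows by a short induction, since each internal node on a root-to-leaf path splits off at least one further leaf, forcing $m\ge(\text{depth})+1$. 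Applying this inside the two child subtrees gives $\operatorname{dist}(i,v)\le a$ and $\operatorname{dist}(j,v)\le b$, whence $d_{ij}(T)=\operatorname{dist}(i,v)+\operatorname{dist}(j,v)\le a+b\le m_v\le n$.

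Combining the two bounds, every coordinate satisfies $2\le d_{ij}(T)\le n$ uniformly over $\Tn'$, so $|d_{ij}(T_1)-d_{ij}(T_2)|\le n-2$ for each pair $i<j$, and taking the maximum gives $d_k(T_1,T_2)\le n-2$. The only real obstacle is keeping the MRCA localization airtight: one must use that $v$ being the common ancestor of \emph{closest} descent guarantees the two halves of the path descend through \emph{different} children of $v$ (so the bounds $a$ and $b$ apply to disjoint leaf sets), together with the depth lemma above. Everything else is bookkeeping, and the caterpillar configuration---one tree placing $i,j$ as a cherry (distance $2$) and another stretching them to distance $n$---shows the bound $n-2$ is attained.
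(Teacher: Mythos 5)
Your proof is correct, but it takes a genuinely different route from the paper, which in fact offers no direct argument at all: the paper simply cites Lemma~1 of \cite{Coons}, which gives $d_k(T_1,T_2)\le n-3$ for \emph{unrooted} trees in $\mathcal{T}_n$, and transfers it to rooted trees via the identification $\mathcal{T}_n' = \mathcal{T}_{n+1}$ (a rooted $n$-leaf tree is an unrooted $(n+1)$-leaf tree with a distinguished extra leaf at the root), which turns $n-3$ into $(n+1)-3 = n-2$. You instead prove the rooted bound from scratch by a pointwise estimate on the vectorization: every coordinate satisfies $2 \le d_{ij}(T) \le n$ --- the lower bound because the MRCA of two distinct leaves is an internal node, the upper bound from the depth lemma (in a rooted tree whose internal nodes have at least two children, a leaf's depth is at most the number of leaves minus one) applied to the two disjoint child subtrees below the MRCA --- so each coordinate difference, and hence the $\ell_\infty$ norm, is at most $n-2$. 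Your handling of the delicate points is sound: the MRCA localization does guarantee the two halves of the path descend through different children, the leaf counts $a$ and $b$ of those subtrees are disjoint so $a+b\le n$, and nothing in the argument requires binarity. As for what each approach buys: yours is self-contained, makes transparent where the constant comes from (maximal path length $n$ minus minimal path length $2$), and exhibits tightness via the caterpillar; the paper's citation-plus-shift is shorter and keeps the rooted/unrooted bookkeeping uniform, since the same identification $\mathcal{T}_n' = \mathcal{T}_{n+1}$ simultaneously transfers the companion statement Lemma~\ref{coon2} (the caterpillar characterization of when equality holds) without any new proof.
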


A {\em caterpillar tree} is any unrooted binary phylogenetic tree which
reduces to the path if we delete all edges attached to a leaf and all
leaves (see Figure \ref{coalesced} for an example).

\begin{lemma}[Corollary 1 from \cite{Coons}]\label{coon2}
If $d_{k}(T_1, \, T_2) = (n-2) $ for  $T_1, \, T_2 \in \Tn'$, then
$T_1$ or $T_2$ is a caterpillar tree.
\end{lemma}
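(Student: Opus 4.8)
The plan is to collapse the $l_\infty$ equality to a single coordinate and then read the tree shape off the maximality of one path. First I would use the identification $\Tn' = \mathcal{T}_{n+1}$ to regard $T_1$ and $T_2$ as unrooted binary trees on $n+1$ leaves, retaining the $n$ original leaves as labels $1,\dots,n$; such a tree has exactly $(n+1)-2 = n-1$ internal vertices. Since $d_k(T_1,T_2) = ||v(T_1)-v(T_2)||_\infty$, the hypothesis $d_k(T_1,T_2)=n-2$ means there is a pair $i<j$ with $|d_{ij}(T_1)-d_{ij}(T_2)| = n-2$. After possibly swapping the roles of $T_1$ and $T_2$, I may assume $d_{ij}(T_1)-d_{ij}(T_2) = n-2$; this swap is exactly what produces the disjunction ``$T_1$ \emph{or} $T_2$'' in the conclusion.

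Next I would establish the uniform bounds $2 \le d_{ij}(T) \le n$ valid for every $T\in\Tn'$ and every pair of leaves. The lower bound holds because two distinct leaves are never adjacent (each leaf has a single internal neighbour), so their connecting path has at least two edges. The upper bound holds because the interior vertices of the simple path joining $i$ and $j$ are internal vertices of $T$, of which there are only $n-1$; hence the path has at most $n-1$ interior vertices and therefore at most $n$ edges. (In particular this re-derives the bound of Lemma~\ref{coon1}.) Substituting $d_{ij}(T_1)\le n$ and $d_{ij}(T_2)\ge 2$ into $d_{ij}(T_1)-d_{ij}(T_2)=n-2$ forces both inequalities to be equalities, so $d_{ij}(T_1)=n$ and $d_{ij}(T_2)=2$.

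The crux is the structural step, and this is the only place needing genuine care: I must show that $d_{ij}(T_1)=n$ forces $T_1$ to be a caterpillar. The path from $i$ to $j$ in $T_1$ has $n$ edges and therefore $n-1$ interior vertices, all of which are internal; since $T_1$ has only $n-1$ internal vertices in total, these interior vertices are \emph{precisely} the full internal-vertex set. I would then consider the subtree of $T_1$ induced on its internal vertices: it is a tree on $n-1$ vertices, hence has $n-2$ edges, and it contains the $(n-2)$-edge path traced out by the $i$-to-$j$ path. A tree and a spanning subpath with the same number of edges must coincide, so the internal vertices form a single path, which is exactly the definition of a caterpillar. The main obstacle to watch is confirming that no internal vertex lies off this path and that every interior vertex of the $i$-$j$ path is genuinely internal; once that is pinned down, $T_1$ is a caterpillar, completing the argument.
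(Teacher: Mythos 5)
Your argument is correct, but there is nothing in the paper to compare it against: the paper imports this statement verbatim as Corollary~1 of \cite{Coons} and gives no proof of its own, so your proposal supplies a self-contained derivation where the paper relies on citation. Checking your steps: the identification $\Tn' = \mathcal{T}_{n+1}$ does give an unrooted binary tree with $n-1$ internal vertices, and (as the paper's three-leaf example confirms) the vectorization ranges over pairs of the $n$ original leaves, so your bounds $2 \le d_{ij}(T) \le n$ are the right ones --- the lower bound because no two leaves are adjacent in a binary tree with at least three leaves, the upper bound because the interior vertices of a leaf-to-leaf path have degree at least two and hence are internal, of which there are only $n-1$. The hypothesis then pins a pair $(i,j)$ with $d_{ij}(T_1)-d_{ij}(T_2)=n-2$ (after the swap that accounts for the disjunction ``$T_1$ or $T_2$''), and the squeeze forces $d_{ij}(T_1)=n$, $d_{ij}(T_2)=2$. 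Your structural step is also sound: the $n-1$ interior vertices of the $i$-$j$ path exhaust the internal vertices of $T_1$; the internal vertices induce a connected subtree on $n-1$ vertices, hence with $n-2$ edges; this subtree contains the $(n-2)$-edge interior path as a subgraph, and a tree containing a spanning path with the same number of edges coincides with that path, which is exactly the paper's definition of a caterpillar (the graph obtained by deleting leaves and leaf edges is a path). As a bonus, your equality analysis simultaneously re-derives Lemma~\ref{coon1}, so both cited lemmas could be made self-contained in the paper by essentially your argument.
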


\cite{Coons} considered unrooted trees in $\Tn$. In the case of
unrooted trees in $\Tn$, we have the bound $(n - 3)$ in Lemma
\ref{coon1} and Lemma \ref{coon2}.   But in this section
we consider $\Tn'$, the space of rooted trees and using the fact that
$\Tn' = \mathcal{T}_{n+1}$, thus we have the bound $((n+1) - 3) = (n -
2)$.  For example, if we consider $T_1$ and $T_2 $ in $\Tn'$ as seen
Figure \ref{twotrees2}, then $d_k(T_1, \, T_2) = ||(2, 3, 3) - (3, 2,
3) ||_{\infty} =  (3 - 2) = 1$.

\begin{figure}
\begin{center}
\includegraphics[width=4in]{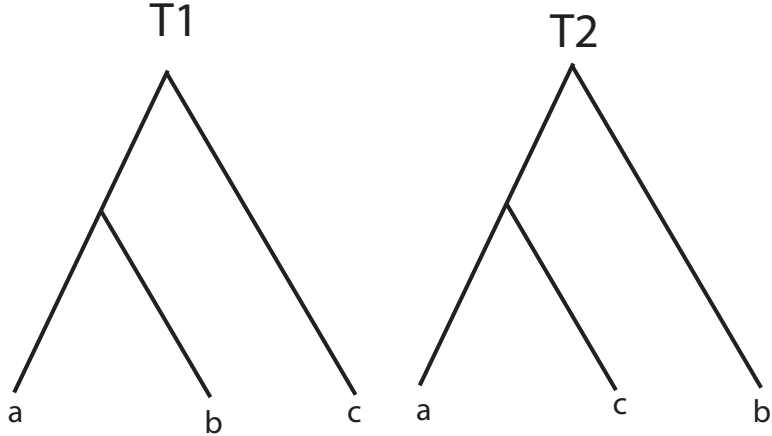}
\end{center}
\caption{Example phylogenetic rooted trees: $T_1$ and $T_2$. The trees
  represent proposed most recent common ancestor relationships between
  3 taxa, labeled $a$ through $c$.}
\label{twotrees2}
\end{figure}

Thus, a caterpillar tree is a special case, so we consider that the
species tree $T_s \in \Tn'$ be a caterpillar tree.  In this section
we also consider a sample size of individuals from each species is one
and each species has the same effective population size $N_e$.  Let
$t_i$ be a time interval in the coalescent time unit between the $(i -
1)$th event when two species are coalesced to the $i$th event when two
species are coalesced (see figure \ref{coalesced}).  

\begin{figure}
\begin{center}
\includegraphics[width=3in]{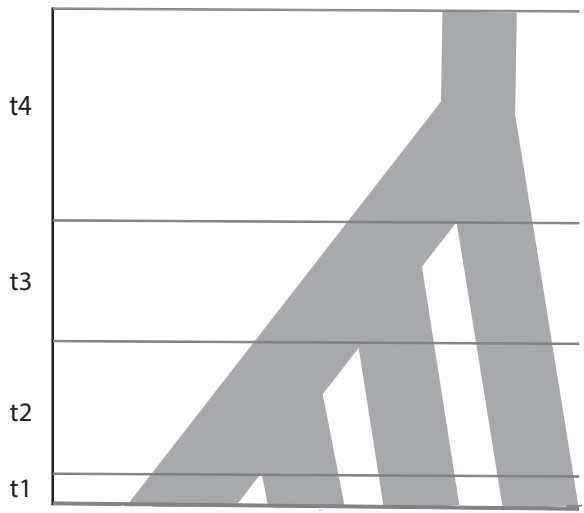}
\end{center}
\caption{The caterpillar species tree $T_s$ with $n = 4$.  }\label{coalesced}
\end{figure}

Let $T_s \in \Tn'$ be  a caterpillar tree.  
Now we consider the probability that $T_s \in \Tn'$
and a gene tree $T_g$ generated by the coalescent given
the species tree $T_s$ have the same tree topology.

Let $g_{ij}(t)$ be the probability
that $i$ lineages derive from $j$ lineages that existed $t > 0$
coalescent time units in the past such that
\[
g_{ij}(t) = \sum_{k = j}^i \exp(\frac{-k(k-1)t}{2})\frac{(2k-1)(-1)^{k-j}j_{(k-1)}i_{[k]}}{j!(k-j)!i_{(k)}},
\]
where $a_{(k)}= a(a+1)\ldots (a+k-1)$ for $k \geq 1$ with $a_{(0)} =
1$; and $a_{[k]} = a(a-1)\ldots (a-k+1)$ for $k \geq 1$ with $a_{(0)} =
1$ \cite[]{Takahata1989,Takahata,Tavare}.  $g_{ij}(t) = 0$ except with
$1 \leq j \leq i$. 

\begin{remark}
If $t$ is a scale of coalescent time units then $t$ can be written as
$t = \frac{t'}{N_e}$ where $t'$ is the number of generation and $N_e$ is a
population size.  We assume that the
size of an ancestral species is the sum of the sizes of its
descendants so that the scaling of time would be different
before and after the divergence of the ancestor, i.e., before
diverging the scale of coalescent time unit would be $t =
\frac{t'}{2N_e}$ and after diverging it would be $t =
\frac{t'}{N_e}$.
\end{remark}

\begin{remark}
In fact, we can simplify $g_{21}(t_i)$ for some coalescent time
interval $t_i > 0$ and it can be written as 
\[
 g_{21}(t_i) = 1-\exp(-t_i).
\]
\end{remark}

Before we show the probability that any of these three distribution
between the caterpillar species tree and gene trees generated from the
coalescent process equals to zero, we have to define some notation.  

To consider this problem, we need to count the number of cases of $M
\in \N$ branches with $N \in \N$ lineages in total. Let $C_{N,M}$ be the number of
cases that $N$ lineages coalesce to $M$ lineages. We call the number of lineages in a specific branch the ``branch degree". Obviously, the answer depends on if we consider the orders among branches with the same branch degree. If we consider the two figures in Figure \ref{fig:order} as different cases, then it is not very difficult to obtain that $C_{N,M} = \displaystyle \frac{\prod_{i=2}^N {i \choose 2}}{\prod_{i=2}^M {i \choose 2}}$. However, it will be more complicate if we consider them as the same case. We need to first enumerate all possible ordered $M$ branch degrees (number of lineages coalesce in the branch), then sum up the number of cases for each ordered branch degrees. For example, when $N=5$ and $M=3$, we have two possible ordered branch degrees $(113)$ and $(122)$; since for we have ${5 \choose 3}*(2\cdot 3-3)!!=30$ cases for $(113)$, and ${5 \choose 2}{3 \choose 2}/2=15$ cases for $(122)$, we have 45 cases in total.

\begin{figure}[ht]
  \centering
  \subfigure[12 happens after 34]{
    \label{fig:order1}     
    \includegraphics[width=3in]{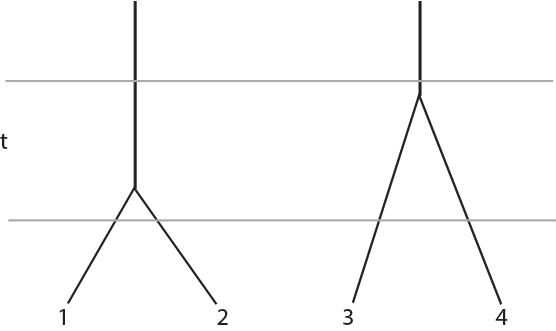}}
  \hspace{0.5in}
  \subfigure[34 happens after 12]{
    \label{fig:order2}     
    \includegraphics[width=3in]{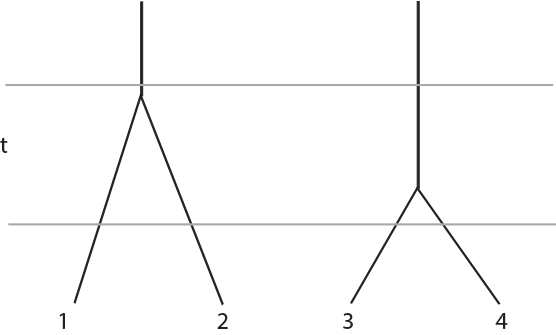}}
  \caption{4 lineages coalesce to 2 lineages with the same topology $12\mid 34$}
  \label{fig:order}     
{\footnotesize }
\end{figure}

Define $\mathcal D_{M,N}=\{(w_1, w_2, \ldots, w_M)\in \mathbb Z_+^M: \sum\limits_{l=1}^M w_l=N,\ w_1\leq w_2 \leq \ldots \leq w_M\}$ as the set of all possible ordered branch degrees. It is trivial to prove that we can enumerate all elements in $\mathcal D_{M,N}$ without duplication in the following way:
\begin{eqnarray}
\mathcal D_{M,N}=\large\{(w_1, w_2, \ldots, w_M)\in \mathbb Z_+^M: & &
w_1 = 1, 2, \ldots, \lfloor \frac{N}{M} \rfloor, \ w_2 = 1, 2, \ldots, \lfloor \frac{N-w_1}{M-1} \rfloor, \nonumber\\
& \ldots, & w_{M-1} = 1, 2, \ldots, \lfloor \frac{N-\sum\limits_{l=1}^{M-2} w_l}{M-1} \rfloor,  \ w_M = N- \sum\limits_{l=1}^{M-1} w_l
\large\}, \nonumber
\end{eqnarray}
where ``$\lfloor \cdot \rfloor$" gives the largest integer that is smaller than a specific real number. We can define an 1-1 mapping over $\mathcal D_{M,N}$ such that $\forall \mathbf w=(w_1, w_2, \ldots, w_M) \in \mathcal D_{M,N}$, $\mathbf w$ maps to two vectors $\mathbf n(\mathbf w) = (n_0, n_1, \ldots, n_l)\in \mathbb Z_+^{l+1}$ and $\mathbf u(\mathbf w)=(u_0,u_1, \ldots, u_l) \in \mathbb Z_+^{l+1}$ which satisfy
\[
\mathbf w = (\underbrace{n_0, \cdots, n_0}_{u_0 \ many}, \ \underbrace{n_1, \cdots, n_1}_{u_1 \ many}, \cdots, \underbrace{n_l, \cdots, n_l}_{u_l  \ many}).
\]
where $n_0=1<n_1<n_2<\cdots<n_l$. Notice that this implies $\sum\limits_{\alpha=0}^{l} u_\alpha = M$ and $\sum\limits_{\alpha=0}^{l} u_\alpha n_\alpha = N$.

\begin{lemma}\label{lem:CMN}
\[
C_{M,N} = \sum\limits_{\mathbf n(\mathbf w), \mathbf u(\mathbf w): \mathbf w \in \mathcal D_{M,N}}  \left\{ \frac{N!}{u_0!} \prod\limits_{\alpha=1}^{l} \frac{((2n_\alpha-3)!!)^{u_\alpha}}{u_\alpha! (n_\alpha!)^{u_\alpha}} \right\}.
\]
\end{lemma}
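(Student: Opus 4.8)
The plan is to prove the lemma by a direct enumerative count, stratified over the possible branch-degree profiles. First I would pin down the combinatorial meaning of a single ``case'' in the unordered regime: a configuration in which $N$ labeled lineages coalesce to $M$ lineages is recorded by (i) the set partition of the $N$ leaves into the $M$ branches, together with (ii) for each branch, the leaf-labeled rooted binary coalescent topology on the lineages it contains, where the relative temporal ordering of coalescent events is discarded (this is exactly what Figure~\ref{fig:order} illustrates, the two ordered histories collapsing to a single case) and branches of equal degree are unordered. With this reading, $C_{M,N}$ equals the number of pairs consisting of a set partition of the $N$ labels into $M$ blocks and a within-branch topology on each block.

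Next I would stratify this count by the sorted branch-degree profile $\mathbf w=(w_1,\dots,w_M)\in\mathcal D_{M,N}$, encoded through its distinct values $n_0=1<n_1<\cdots<n_l$ with multiplicities $u_0,\dots,u_l$ (so $\sum_\alpha u_\alpha=M$ and $\sum_\alpha u_\alpha n_\alpha=N$). For a fixed profile, the number of set partitions of $\{1,\dots,N\}$ having exactly $u_\alpha$ blocks of size $n_\alpha$ is the classical type count
\[
\frac{N!}{\prod_{\alpha=0}^{l}(n_\alpha!)^{u_\alpha}\,\prod_{\alpha=0}^{l}u_\alpha!},
\]
in which the multinomial $N!/\prod_\alpha (n_\alpha!)^{u_\alpha}$ distributes the labels into an ordered list of blocks and the factor $\prod_\alpha u_\alpha!$ removes the spurious ordering among branches of equal degree.

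I would then supply the within-branch factor: the number of leaf-labeled rooted binary topologies on $k$ lineages is $(2k-3)!!$, with the convention $(-1)!!=1$. This follows from the recursion $R(k)=(2k-3)R(k-1)$ (the new leaf subdivides one of the $2k-3$ edges of a planted tree on $k-1$ leaves), equivalently by identifying rooted trees on $k$ leaves with unrooted trees on $k+1$ leaves (Semple--Steel). Multiplying over branches gives a factor $\prod_{\alpha=0}^{l}\big((2n_\alpha-3)!!\big)^{u_\alpha}$. Since $n_0=1$ forces both $(n_0!)^{u_0}=1$ and $((2n_0-3)!!)^{u_0}=1$, the singleton branches contribute nothing to the topology product and only $u_0!$ to the denominator; combining the two displayed factors therefore yields precisely the summand
\[
\frac{N!}{u_0!}\prod_{\alpha=1}^{l}\frac{\big((2n_\alpha-3)!!\big)^{u_\alpha}}{u_\alpha!\,(n_\alpha!)^{u_\alpha}}.
\]
Since the enumeration of $\mathcal D_{M,N}$ given above lists each degree profile exactly once and every configuration has a unique profile, summing over $\mathbf w\in\mathcal D_{M,N}$ gives $C_{M,N}$.

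The main obstacle is conceptual rather than computational: one must fix the equivalence relation defining a ``case'' in the unordered setting, namely that ranking information is forgotten all the way down to within-branch topology, since this is exactly what singles out $(2k-3)!!$ as the correct per-branch factor rather than the ranked-history count $k!(k-1)!/2^{k-1}$ (the two agree for $k\le 3$ but differ at $k=4$, where they give $15$ versus $18$). I would sanity-check the whole derivation against the worked example $N=5,\,M=3$, where the profiles $(1,1,3)$ and $(1,2,2)$ contribute $\binom{5}{3}\cdot 3!!=30$ and $\tfrac{1}{2}\binom{5}{2}\binom{3}{2}=15$, reproducing the stated total of $45$; the remaining identities are routine.
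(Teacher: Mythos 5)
Your proof is correct and follows essentially the same route as the paper's: stratify the count over the sorted degree profiles in $\mathcal D_{M,N}$, count the ways to distribute the $N$ labeled lineages among branches, multiply by $(2n_\alpha-3)!!$ rooted topologies per branch, and divide by $u_\alpha!$ for the unordered branches of equal degree. The only cosmetic difference is that you invoke the closed-form count of set partitions by type, whereas the paper builds the same quantity by telescoping a product of binomial coefficients; your explicit remark distinguishing the topology count $(2k-3)!!$ from the ranked-history count is a useful clarification the paper leaves implicit in Figure~\ref{fig:order}.
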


\begin{proof}
Consider $\mathbf n(\mathbf w)$ and $\mathbf u(\mathbf w)$ of an arbitrary $\mathbf w \in \mathcal D_{M,N}$. We have $u_\alpha$ branches with degree $n_\alpha$, $\alpha = 0, 1, \ldots, l$. For each branch with degree $n_\alpha$, we have $(2n_\alpha-3)!!$ different tree topologies. Notice that we don't consider the permutation among the $u_\alpha$ branches with degree $n_\alpha$. Thus the number of cases that we choose first $u_1$ branches with degree $n_1$ is:
\begin{eqnarray}
& & \frac{{N \choose n_1}{N-n_1 \choose n_1}\cdots {N-(u_1-1)n_1 \choose n_1} [(2n_1-3)!!]^{u_1}}{u_1!} \nonumber\\
& = & \frac{N!}{n_1!\text{\sout{$(N-n_1)!$}}} \cdot \frac{\text{\sout{$(N-n_1)!$}}}{n_1!\text{\sout{$(N-2n_1)!$}}} \cdots \frac{\text{\sout{$(N-(u_1-1)n_1)!$}}}{n_1!(N-u_1n_1)!} \cdot \frac{[(2n_1-3)!!]^{u_1}}{u_1!}\nonumber \\
& = & \frac{N!}{(n_1!)^{u_1} (N-u_1n_1)!} \frac{[(2n_1-3)!!]^{u_1}}{u_1!}. \nonumber
\end{eqnarray}

\end{proof}
Therefore, consider the rest branches, the total number of cases, $C_{M,N}$, is:
\begin{eqnarray}
&  & \frac{{N \choose n_1}\cdots {N-(u_1-1)n_1 \choose n_1} [(2n_1-3)!!]^{u_1}}{u_1!}  \cdot
\frac{{N-u_1 n_1 \choose n_2}\cdots {N-u_1 n_1-(u_2-1)n_2 \choose n_2} [(2n_2-3)!!]^{u_2}}{u_2!}   \nonumber\\
&  & \cdots \frac{{N-\sum\limits_{\alpha=1}^{l-1} u_\alpha n_\alpha \choose n_l}\cdots {N-\sum\limits_{\alpha=1}^{l-1} u_\alpha n_\alpha -(u_l-1)n_l \choose n_l} [(2n_l-3)!!]^{u_l}}{u_l!}  \nonumber\\
& = & \frac{N!}{(n_1!)^{u_1} \text{\sout{$(N-u_1n_1)!$}}} \frac{[(2n_1-3)!!]^{u_1}}{u_1!} \cdot
\frac{ \text{\sout{$(N-u_1n_1)!$}}}{(n_2!)^{u_2}  \text{\sout{$(N-\sum\limits_{\alpha=1}^{2} u_\alpha n_\alpha)!$}}} \frac{[(2n_2-3)!!]^{u_2}}{u_2!}  \nonumber\\
&  & \cdots \frac{\text{\sout{$(N-\sum\limits_{\alpha=1}^{l-1} u_\alpha n_\alpha)!$}}}{(n_l!)^{u_l} (N-\sum\limits_{\alpha=1}^{l} u_\alpha n_\alpha)!} \frac{[(2n_l-3)!!]^{u_l}}{u_l!} \nonumber \\
& = &  N! \cdot \frac{[(2n_1-3)!!]^{u_1}}{(n_1!)^{u_1} u_1!} \cdot
\frac{[(2n_2-3)!!]^{u_2}}{(n_2!)^{u_2} u_2!}  \cdots 
\frac{[(2n_l-3)!!]^{u_l}}{(n_l!)^{u_l} u_l!} \frac{1}{ (u_0)!}. \nonumber
\end{eqnarray}

\begin{example}
The following table gives the values of $C_{M,N}$ when $N\leq 6$:
\begin{center}
\begin{tabular}{l|*{7}{c}}
\backslashbox{$M$}{$N$} & 1 & 2 & 3 & 4 & 5 & 6 \\\hline
1 & 1 & 1 & 3 & 15 & 105 & 945 \\
2 &   & 1 & 3 & 15 & 105 & 945  \\
3 &   &   & 1 & 6 & 45 & 420 \\
4 &   &   &   & 1 & 10 & 105  \\
5 &   &   &   &   & 1 & 15  \\
6 &   &   &   &   &   & 1 
\end{tabular}
\end{center}
Take $N=6$, $M=3$ for example. There are 3 possible ordered branch degrees:
\begin{enumerate}
\item
$\mathbf w = (114)$, $\mathbf n = (14)$, $\mathbf u = (21)$, number of cases: $\frac{6!}{2!}\cdot \frac{[(2*4-3)!!]^1}{1!(4!)^1}=225$;
\item
$\mathbf w = (123)$, $\mathbf n = (123)$, $\mathbf u = (111)$, number of cases: $\frac{6!}{1!}\cdot \frac{[(2*2-3)!!]^1}{1!(2!)^1}\cdot \frac{[(2*3-3)!!]^1}{1!(3!)^1}=180$;
\item
$\mathbf w = (222)$, $\mathbf n = (12)$, $\mathbf u = (03)$, number of cases: $\frac{6!}{0!}\cdot \frac{[(2*2-3)!!]^3}{3!(2!)^3}=15$.
\end{enumerate}
So $C_{6,3} = 225+180+15=420$.
\end{example}


For $n$ species, $n-1$ coalescences should happen during coalescent
times $t_1, t_2, \ldots, t_n$. Here, we call the pattern of how these
coalescences (regardless of which lineages are coalescing) distributed
over the coalescent times, i.e. in which coalescent time does the
$k_{th}$ coalescent happen, the coalescent timeline. When the gene
tree completely matches the species tree, we know that the tree
topology of the gene tree is fixed, i.e. the pattern and ordering of
coalescent are fixed. This means that the only thing we need to think
about is the coalescent timeline. Let's first see a simple example. 

Recall: $g_{ij}(t)$ is the probability that $i$ lineages coalesce to j lineages in time $t$.
\begin{example}
Consider 3 species. Fix the species tree to be $12\mid 3$. Figure \ref{fig:size3ex} gives all possible gene trees based on this species tree.
\begin{figure}[ht]
  \centering
  \subfigure[$12\mid 3$]{
    \label{fig:3a}     
    \includegraphics[width=2in]{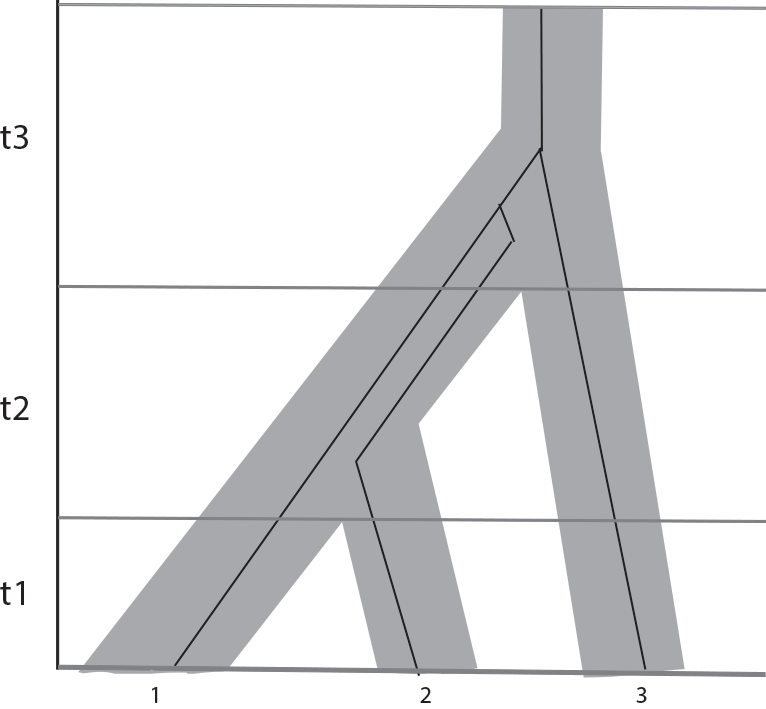}}
  \hspace{0.5in}
  \subfigure[$13\mid 2$]{
    \label{fig:3b}     
    \includegraphics[width=2in]{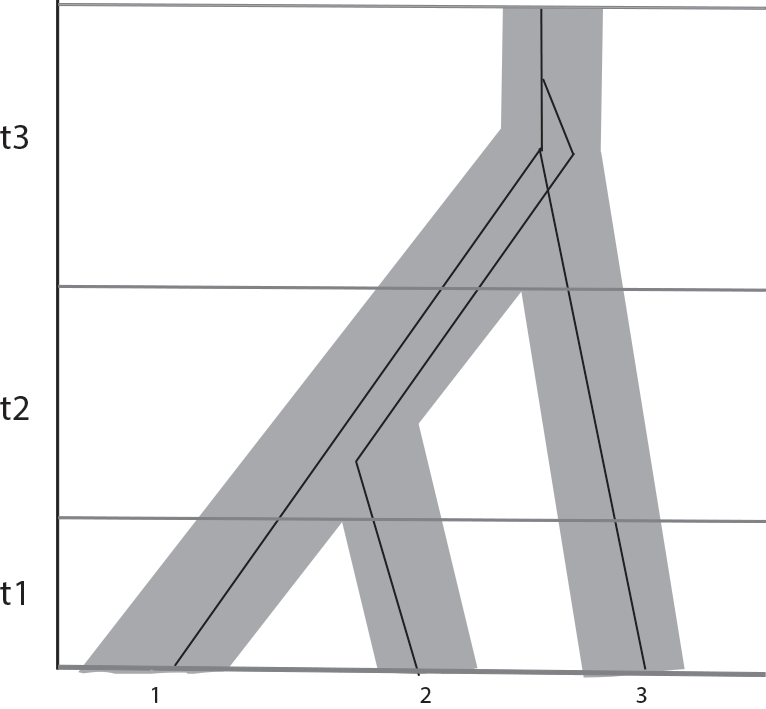}}
      \hspace{0.5in}
  \subfigure[$1\mid 23$]{
    \label{fig:3c}     
    \includegraphics[width=2in]{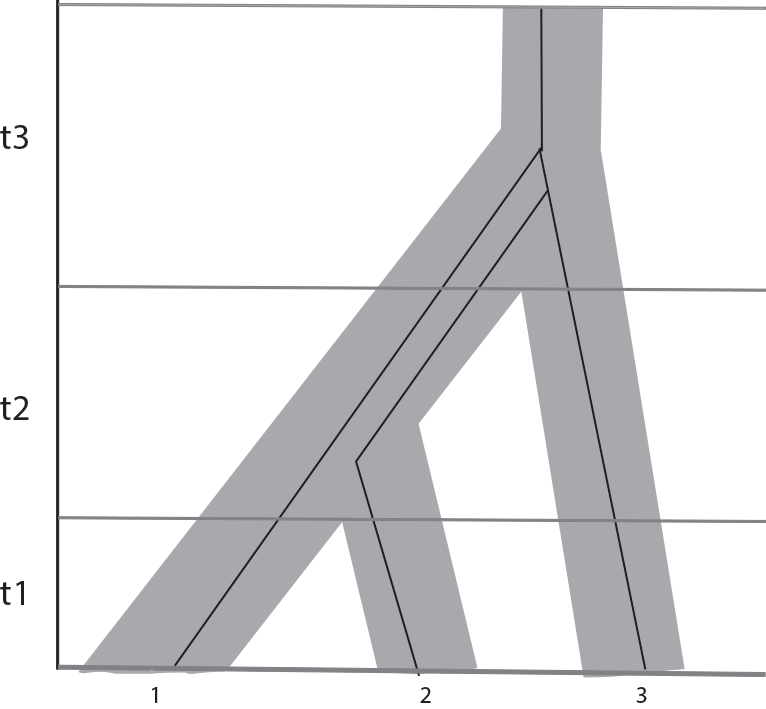}}
  \hspace{0.5in}
  \subfigure[$12\mid 3$]{
    \label{fig:3d}     
    \includegraphics[width=2in]{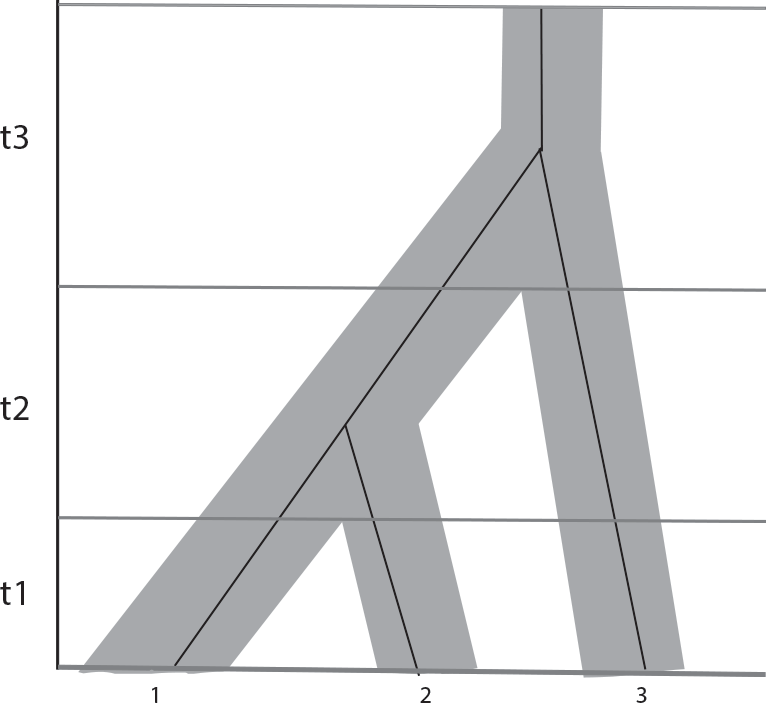}}
  \caption{All possible gene trees for the fixed species tree $12\mid 3$}
  \label{fig:size3ex}     
{\footnotesize }
\end{figure}

We can compute the probabilities of these trees as following and verify them by summing up to 1:
\begin{itemize}
\item
Cases for Figure \ref{fig:3a}, Figure \ref{fig:3b} and Figure \ref{fig:3c}: 
\begin{eqnarray}
& & \Pr((1,2) \mbox{ in } t_3, (12, 3) \mbox{ in } t_3) =\Pr((1,3) \mbox{ in } t_3, (13, 2) \mbox{ in } t_3) \nonumber\\
&= & \Pr((2,3) \mbox{ in } t_3, (23, 1) \mbox{ in } t_3)=\frac 1 {C_{3,1}}g_{22}(t_2)=\frac13 e^{-t_2}. \nonumber
\end{eqnarray}
 Notice that we have $\frac 1 {C_{3,1}}$ here because all these trees share the same coalescent timeline (both coalescences happen in $t_3$), and we have $C_{3,1}$ cases in $t_3$ where 3 lineages coalesce to 1 lineage;

\item
Case for Figure \ref{fig:3d}: $\Pr((1,2) \mbox{ in } t_2, (12, 3) \mbox{ in } t_3)=g_{21}(t_2)=1-e^{-t_2}$.
\end{itemize}
In this example, $\Pr(d(T_s, T_e)=0)=\Pr((1,2) \mbox{ in } t_3, (12, 3) \mbox{ in } t_3)+\Pr((1,2) \mbox{ in } t_2, (12, 3) \mbox{ in } t_3)=1-\frac23 e^{-t_2}$.
\end{example}

Since for each coalescent timeline, there is only one case gives a
gene tree which completely matches the species tree, all we need to do
is enumerate the coalescent timeline and compute probability for each
of them. 

\begin{theorem}\label{thm:probd0}
For $n$ species,
\begin{eqnarray}\label{eqn:probd0}
\Pr(d(T_s, T_e)=0) = \sum\limits_{i_2=0}^1 \sum\limits_{i_3=i_2}^2 \cdots \sum\limits_{i_k=i_{k-1}}^{k-1} \cdots \sum\limits_{i_{n-1}=i_{n-2}}^{n-2}
\left\{
\left[\prod\limits_{k=2}^{n-1} \frac{g_{k-i_{k-1}, k-i_k} (t_k)}{C_{k-i_{k-1}, k-i_k}}\right] \cdot \frac 1 {C_{n-i_{n-1},1}}
\right\},
\end{eqnarray}
where $i_1=0$.
\end{theorem}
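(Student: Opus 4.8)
The plan is to compute $\Pr(d(T_s,T_e)=0)$ by conditioning on the \emph{coalescent timeline} and then factoring across the coalescent intervals via the Markov property of the coalescent. Since each of $d_e,d_p,d_k$ is a metric (Remark~\ref{rm:np}), the event $d(T_s,T_e)=0$ is exactly the event that the gene tree $T_e$ has the same (caterpillar) topology as $T_s$, so the statement holds simultaneously for all three distances. Writing the caterpillar as $((\cdots((1,2),3),\ldots),n)$, matching requires the $n-1$ gene coalescences to occur in the fixed nested order $C^{(1)},C^{(2)},\ldots,C^{(n-1)}$, where $C^{(1)}$ merges lineages $1$ and $2$ and each $C^{(j)}$ merges the current ``spine'' ancestor of $\{1,\ldots,j\}$ with the lineage of species $j+1$. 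I would encode a timeline by the nondecreasing sequence $0=i_1\le i_2\le\cdots\le i_{n-1}$, where $i_k$ counts the coalescences completed by the end of interval $t_k$. The first step is the bookkeeping: species $k$ enters the merging population only at the start of $t_k$, so the number of lineages entering $t_k$ is $(k-1-i_{k-1})+1=k-i_{k-1}$ and the number leaving is $k-i_k$; the constraints $i_{k-1}\le i_k$ (coalescences are nonnegative) and $i_k\le k-1$ (at least one lineage survives each interval) are precisely the summation ranges in \eqref{eqn:probd0}. Interval $t_1$ contributes trivially, since each species then carries a single lineage and no coalescence is possible, which is why the product begins at $k=2$.

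Next I would fix a timeline and factor $\Pr(\text{match}\mid\text{timeline})$. By the independence of the coalescent in disjoint intervals, this probability splits into a product over $k$ of (the probability of the prescribed \emph{lineage-count} transition in $t_k$) times (the conditional probability that, given that transition, the mergers are exactly the caterpillar-compatible ones). The first factor is $g_{k-i_{k-1},\,k-i_k}(t_k)$ by definition of $g_{ij}$. For the root interval $t_n$ the ancestral population persists arbitrarily far into the past, so the $n-i_{n-1}$ remaining lineages reach a single MRCA with probability one; hence $t_n$ contributes no $g$-factor, only the conditional matching factor $1/C_{n-i_{n-1},1}$.

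The heart of the argument is the conditional matching factor. Conditioned on $m:=k-i_{k-1}$ lineages coalescing to $k-i_k$ lineages during $t_k$, the Kingman coalescent picks, at each of the $c:=i_k-i_{k-1}$ merger events, a uniformly random pair of the currently available lineages; consequently every \emph{ordered} sequence of mergers realizing this transition is equally likely, and there are $\prod_{r=0}^{c-1}\binom{m-r}{2}=\prod_{i=k-i_k+1}^{k-i_{k-1}}\binom{i}{2}$ of them. Because the caterpillar coalescences are totally ordered, exactly one such sequence is caterpillar-compatible in $t_k$: the spine must absorb species $i_{k-1}+2,\ldots,i_k+1$ in that order. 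This identifies the conditional matching factor as the reciprocal of that count, which is the symbol $C_{k-i_{k-1},\,k-i_k}$ of \eqref{eqn:probd0}, and likewise $1/C_{n-i_{n-1},1}$ at the root. Multiplying the two factors over all intervals and summing over all timelines $i_2,\ldots,i_{n-1}$ in their ranges yields \eqref{eqn:probd0}; as a final check I would verify that for $n=3$ the sum collapses to $1-\tfrac{2}{3}e^{-t_2}$, matching the computation preceding the theorem.

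The main obstacle I anticipate is making the ``exactly one equally likely case matches'' step fully rigorous. One must show (i) that in any matching scenario the lineages present at the start of $t_k$ are precisely the spine ancestor of $\{1,\ldots,i_{k-1}+1\}$ together with the singletons $i_{k-1}+2,\ldots,k$, so that ``caterpillar-compatible'' is well defined interval by interval and the forced order of spine absorptions is unambiguous; and (ii) that the correct enumeration is of \emph{ordered} merger histories $\prod_{i=M+1}^{N}\binom{i}{2}$ rather than of unordered coalescent forests. This is the delicate point at which the symbol $C_{N,M}$ used in \eqref{eqn:probd0} must be read as the ordered (ranked) count, since unordered forests are not equiprobable under the coalescent; for instance, with $N=4$ lineages coalescing to $M=1$, a fixed caterpillar arises with probability $1/18$, not $1/15$. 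I would therefore prove the equiprobability-of-ordered-histories lemma first and use it to pin down the denominators before carrying out the summation, reconciling this count with the one in Lemma~\ref{lem:CMN}.
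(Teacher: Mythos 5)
Your decomposition is the same as the paper's: enumerate coalescent timelines $0=i_1\le i_2\le\cdots\le i_{n-1}$ with $i_{k-1}\le i_k\le k-1$, use the Markov property to factor the matching probability across intervals into $g_{k-i_{k-1},\,k-i_k}(t_k)$ times a conditional matching factor, and let the root interval contribute only the factor $1/C_{n-i_{n-1},1}$. The genuine difference lies in the key lemma you prove to justify the matching factor, and there your version is the correct one. The paper's proof simply asserts that, given the lineage-count transition, the gene tree agrees with the caterpillar with probability $1/C_{N,M}$, where its Lemma \ref{lem:CMN}, its table of values, and its $n=4$ worked example (which uses $1/C_{4,1}=1/15$) all take $C_{N,M}$ to be the number of \emph{unordered} forest topologies. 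Under the Kingman coalescent those outcomes are not equiprobable: what is uniform, conditionally on $N\to M$ lineages within an interval, is the \emph{ranked} (ordered) merger history, which is exactly your equiprobability lemma. Hence the conditional probability of the unique caterpillar-compatible history is $1\big/\prod_{i=M+1}^{N}\binom{i}{2}$; for $N=4$, $M=1$ this is $1/18$, not $1/15$ (each of the $12$ caterpillar topologies admits one ranked history, each of the $3$ balanced topologies admits two, and $12\cdot\tfrac{1}{18}+3\cdot\tfrac{1}{9}=1$). So your proof establishes Theorem \ref{thm:probd0} provided $C_{N,M}$ is read as the ordered count $\prod_{i=M+1}^{N}\binom{i}{2}$ --- the first of the two definitions the paper mentions and then abandons --- whereas with the paper's unordered reading, the formula, the $n=4$ example, and the denominators taken from Lemma \ref{lem:CMN} are mutually inconsistent with the coalescent for $n\ge 4$. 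Note that your final consistency check at $n=3$ cannot detect this discrepancy, because the ordered and unordered counts coincide there ($C_{3,1}=3$); the timeline $(i_1,i_2,i_3)=(0,0,0)$ at $n=4$, compared against the direct computation $\tfrac{1}{6}\cdot\tfrac{1}{3}=\tfrac{1}{18}$, is the smallest case that separates them, so it is worth recording that check explicitly when you write up the lemma.
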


\begin{proof}
Several requirements when we enumerate the coalescent timelines: 1) no coalescent in time $t_1$; 2) if the $i_{th}$ coalescence happens in time $t_{k_i}$, then $i+1\leq k_i\geq n$; 3) if the $i_{th}$ and $j_{th}$ coalescences happen in time $t_{k_i}$ and $t_{k_j}$ respectively and $i<j$, then $k_i \leq k_j$ (otherwise the gene tree will have a different tree topology with the species tree); 4) all lineages coalescent to one in time $t_n$.

In Equation \ref{eqn:probd0}, every choice of $(i_1, i_2, \ldots, i_{n-1})$ gives a possible coalescent timeline: $i_k$ coalescences happen before or during time $t_k$, $k=1, 2, \ldots, n-1$, and $(n-i_{n-1})$ coalescences happen during time $t_n$. It is trivial to see the these choices enumerate all possible coalescent timelines without duplicate.

Now consider a specific $(i_1, i_2, \ldots, i_{n-1})$. Then during time $t_k$, $k=2, 3, \ldots, n-1$, since the input has $k$ species with $i_{k-1}$ coalescences, i.e. $k-i_{k-1}$ lineages, and the output has $k$ species with $i_k$ coalescences, i.e. $k-i_k$ lineages, the probability that the gene tree completely agree with the species tree is $\frac{g_{k-i_{k-1}, k-i_k} (t_k)}{C_{k-i_{k-1}, k-i_k}}$ (see example in Figure \ref{fig:probd0proof}). During time $t_n$, we left $n-i_{n-1}$ lineages and they should coalesce to one, so the probability should be $\frac 1 {C_{n-i_{n-1},1}}$.

\begin{figure}[ht]
  \centering
    \includegraphics[width=3in]{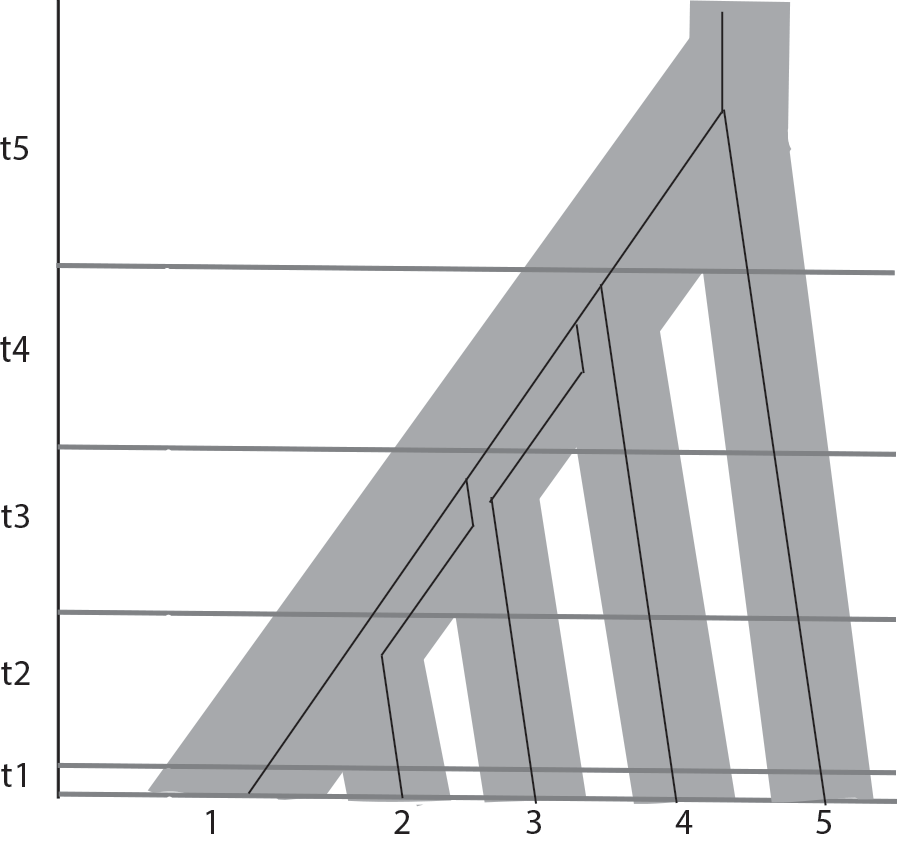}
  \caption{5 species with timeline: $(i_1, i_2, i_3, i_4) = (0,0,1,3)$}
  \label{fig:probd0proof}     
{\footnotesize $i_2-i_1=0$ coalescent happened in $t_2$; $i_3-i_2=1$ coalescent happened in $t_3$; $i_4-i_3=2$ coalescents happened in $t_4$.\\
In time $t_3$, we have $3-i_2=3$ lineages coming and $3-i_3=2$ lineages coming out, so the \\
probability that we get exactly the same topology as this figure during time $t_3$ is $\frac{g_{32}(t_3)}{C_{3,2}}$.}
\end{figure}

\end{proof}

\begin{example}
There are five cases for $n=4$ so that gene tree completely matches the species tree. We apply Theorem \ref{thm:probd0} for $n=4$ in and obtain the following probabilities for each of the cases:
\begin{enumerate}
\item
Coalescents $(1,2)$ in $t_4$; $(12, 3)$ in $t_4$; $(123, 4)$ in $t_4$ (see Figure \ref{fig:4a}). Probability is $\frac 1{15} g_{22}(t_2)g_{33}(t_3)$;
\item
Coalescents $(1,2)$ in $t_3$; $(12, 3)$ in $t_4$; $(123, 4)$ in $t_4$ (see Figure \ref{fig:4b}). Probability is $\frac 19 g_{22}(t_2)g_{32}(t_3)$;
\item
Coalescents $(1,2)$ in $t_3$; $(12, 3)$ in $t_3$; $(123, 4)$ in $t_4$ (see Figure \ref{fig:4c}). Probability is $\frac 13 g_{22}(t_2)g_{31}(t_3)$;
\item
Coalescents $(1,2)$ in $t_2$; $(12, 3)$ in $t_4$; $(123, 4)$ in $t_4$ (see Figure \ref{fig:4d}). Probability is $\frac 13 g_{21}(t_2)g_{22}(t_3)$;
\item
Coalescents $(1,2)$ in $t_2$; $(12, 3)$ in $t_3$; $(123, 4)$ in $t_4$ (see Figure \ref{fig:4e}). Probability is $g_{21}(t_2)g_{21}(t_3)$;
\end{enumerate} 
Then we have formula:
\begin{eqnarray}
\Pr(d(T_s, T_e)=0) = \frac 1{15} g_{22}(t_2)g_{33}(t_3) + \frac 19 g_{22}(t_2)g_{32}(t_3) + \frac 13 g_{22}(t_2)g_{31}(t_3) + \frac 13 g_{21}(t_2)g_{22}(t_3) + g_{21}(t_2)g_{21}(t_3).\nonumber
\end{eqnarray}
\begin{figure}[ht]
  \centering
  \subfigure[$(0,0,0)$]{
    \label{fig:4a}     
    \includegraphics[width=2in]{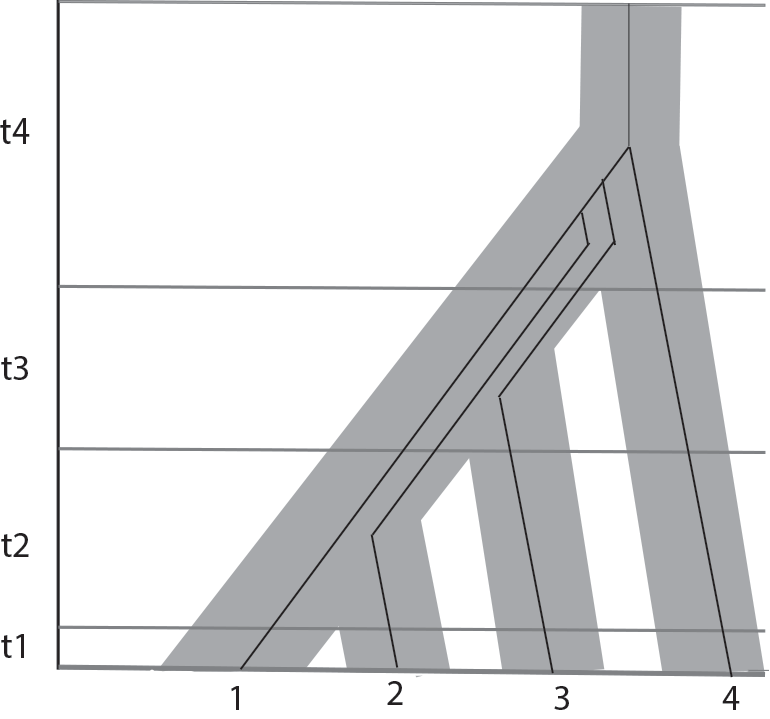}}
  \hspace{0.1in}
  \subfigure[$(0,0,1)$]{
    \label{fig:4b}     
    \includegraphics[width=2in]{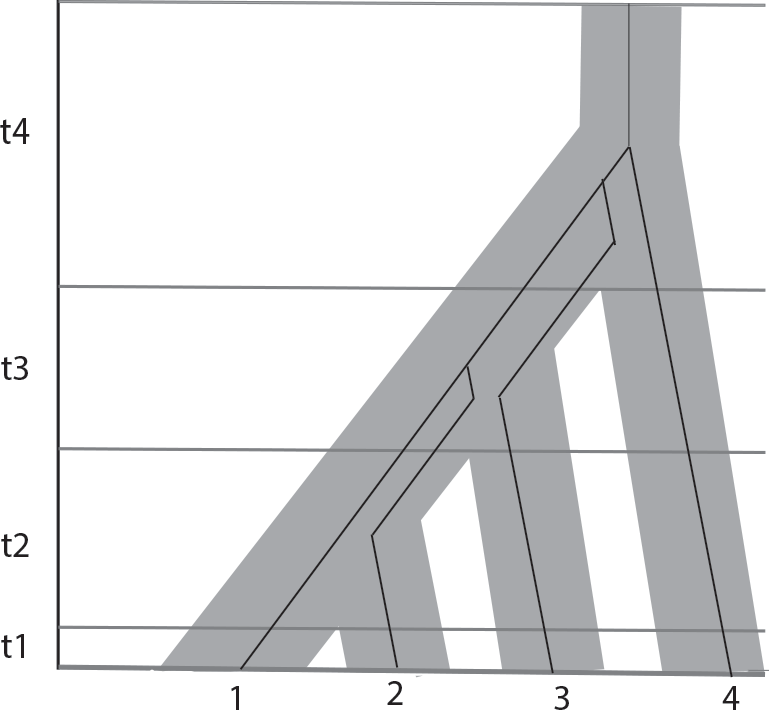}}
      \hspace{0.1in}
  \subfigure[$(0,0,2)$]{
    \label{fig:4c}     
    \includegraphics[width=2in]{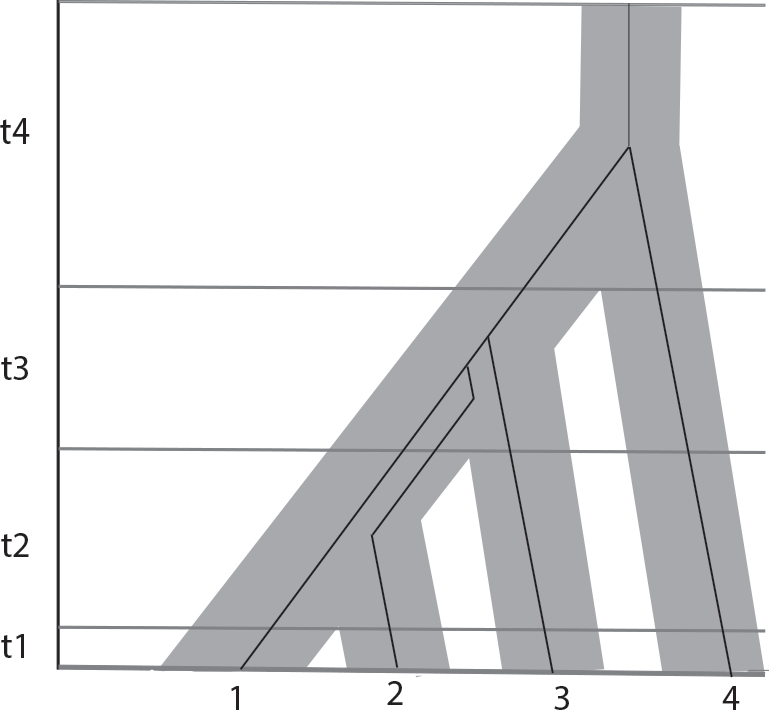}}
  \hspace{0.1in}
  \subfigure[$(0,1,1)$]{
    \label{fig:4d}     
    \includegraphics[width=2in]{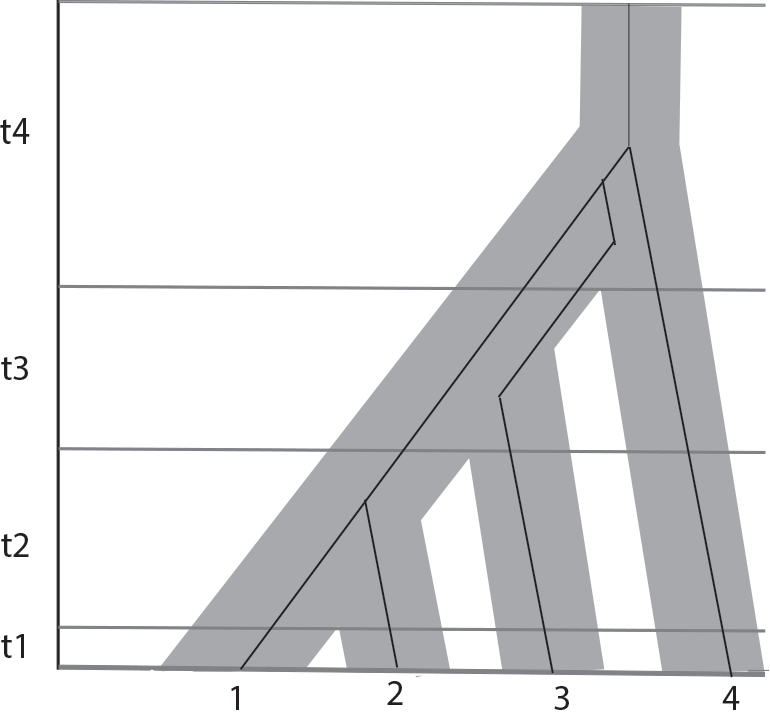}}
  \hspace{0.1in}
  \subfigure[$(0,1,2)$]{
    \label{fig:4e}     
    \includegraphics[width=2in]{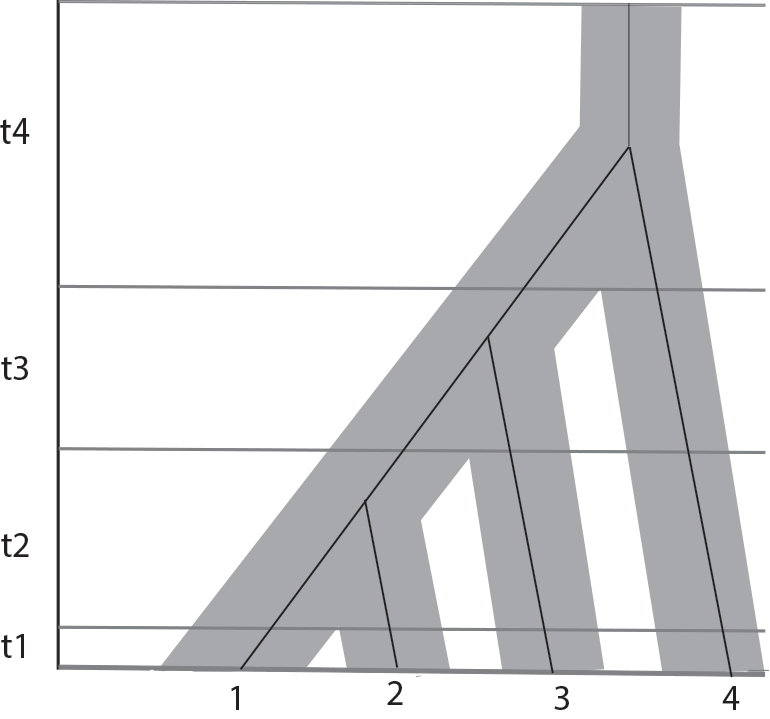}}
  \caption{Gene trees with $d(T_s, T_e)=0$ and their coalescent timelines $(i_1, i_2, i_3)$}
  \label{fig:size4ex}     
{\footnotesize }
\end{figure}
\end{example}

By Theorem \ref{thm:probd0}, if we have larger $t_k$ for $k = 1, \cdots,
n$, then we have higher probability that the species tree $T_s$ and
its gene tree $T_g$ generated under the coalescent given $T_s$ have
the same tree topology. In addition,  since $k$-IC is the
$l_{\infty}$ norm of the vector in $\R^{n \choose 2}$, the path
difference is the $l_2$ norm of the vector in $\R^{n \choose 2}$, and
the edge difference is the $l_1$ norm of the vector in $\R^{n \choose
  2}$, $k$-IC distance tree metric can be used 
for the upper
bound for the path difference tree metric and the edge difference tree
metric by Remark \ref{bounds}.  Thus, by Lemmas \ref{coon1}
and \ref{coon2}, if we have larger $t_k$ for $k = 1, \cdots,
n$, then the distributions of tree distance metric $d_e$,
$d_p$ and $d_k$ between $T_s $ and $ T_g$ are skewed from right.

\section{Simulations}\label{simulations}

First we have conducted simulations study on the three tree distances,
the edge difference, path difference, and precise $K$-IC distances
between two unrooted random trees with $12$ leaves.  We have conducted
a simulation study similar to what \cite{Steel1993} did (Figure 6 on
their paper).  We generated $10,000$ unrooted random trees with $12$
leaves using the function 
{\tt rtree} from {\tt R} package {\tt ape} \cite[]{ape}.  Then  for
each distance measure $d_e, \, d_p, \, d_k$  we computed a histogram.
In order to compare a histogram with each other we normalized the
distances so that they scale from $0$ to $10$.  The results are shown
in Figure \ref{fig:sim1}.  We also conducted the same simulations with the
function {\tt rcoal} from {\tt ape} and we have obtained basically the
same results. 

\begin{figure}
        \centering
        \begin{subfigure}[A histogram of $d_e$ between two unrooted random trees
                with 12 leaves. We scale $d_e$ from $0.0$ to $10.0$
                so that we can compare to the other distance measures.]{\label{fig:edge}
                \includegraphics[width=2in]{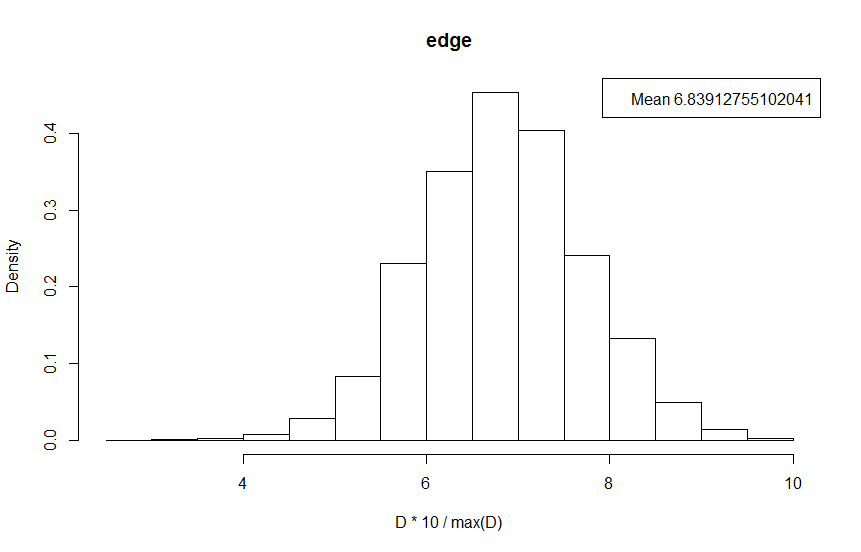}}
        \end{subfigure}
        ~ 
        \begin{subfigure}[A histogram of $d_p$ between two unrooted random trees
                with 12 leaves. We scalee $d_p$ from $0.0$ to $10.0$
                so that we can compare to the other distance measures.]
                { \label{fig:path} \includegraphics[width=2in]{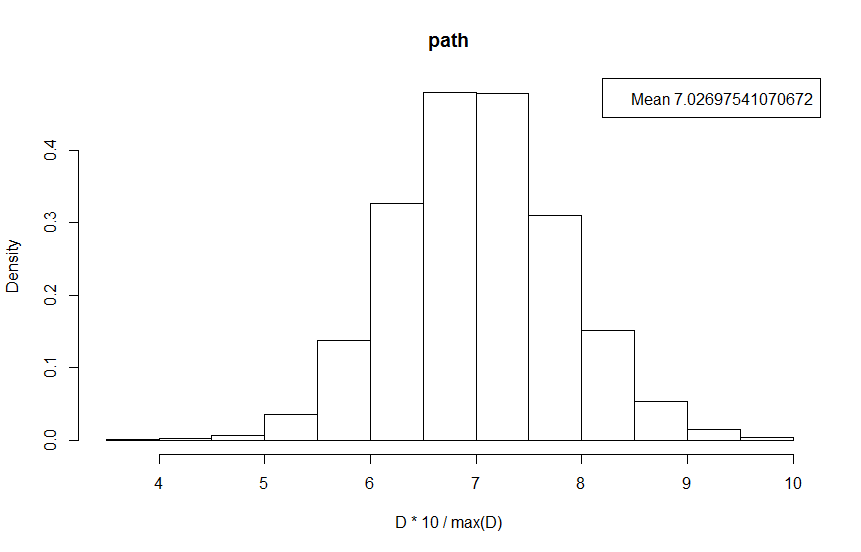}}
        \end{subfigure}
        ~ 
        \begin{subfigure}[A histogram of $d_k$ between two unrooted random trees
                with 12 leaves. We scalee $d_k$ from $0.0$ to $10.0$
                so that we can compare to the other distance measures.]
                { \label{fig:KIC} \includegraphics[width=2in]{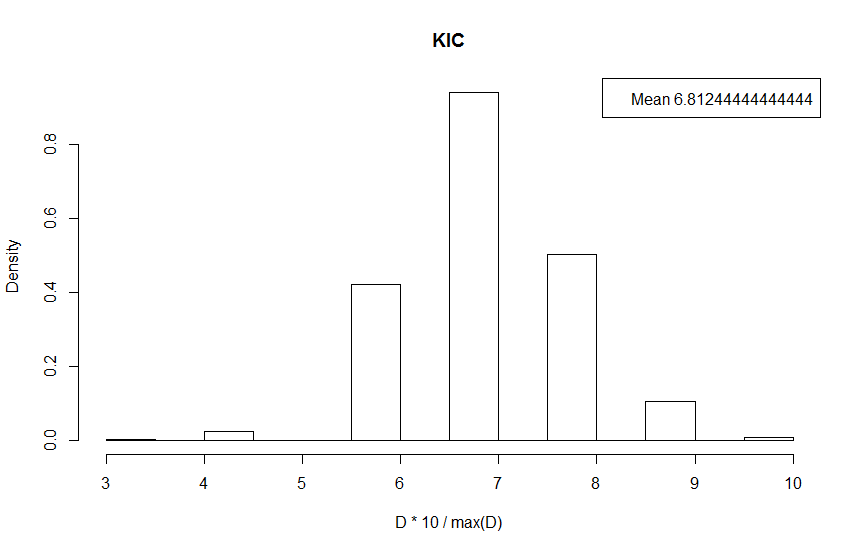}}
        \end{subfigure}
        \caption{We generated $10,000$ random trees using the function
        {rtree} from {\tt ape}.  }\label{fig:sim1}
\end{figure}

In the second simulation part, we conducted a simulation study on the
distributions of $d_e, \, d_p, \, d_k$ between the caterpillar species
tree and a random gene tree generated from the coalescent process with
the species tree.  We use the software {\tt Mesquite}
\cite[]{mesquite275} to generate caterpillar species trees with 5
leaves, 6 leaves, 7 leaves and 8 leaves, respectively under the Yule
process. Then we simulate 10,000 gene trees within each species
tree. For all the trees in the simulation, they have the same
parameters, that is the effective population size $N_e=30,000$ and
species depth$=1,000$. For each kind of trees with certain number of
leaves, we then calculated three different kinds of distances between
the gene trees and species trees. Table \ref{sim:prob0} shows the
proportions of 0 and 1 distances in each of the three distances for the rooted trees with 
5 leaves, 6 leaves, 7 leaves and 8 leaves. Figures \ref{sim_coal1},
\ref{sim_coal2}, and \ref{sim_coal3}
show the histograms of three kinds of distances for trees with 5
leaves, 6 leaves, 7 leaves and 8 leaves. 

\begin{table}[!h]
\begin{center}
\resizebox{0.7\textwidth}{!}{%
\begin{tabular}{c|ccc}
5 leaves & Sample Proportion & Mean Distance & Standard Deviation\tabularnewline
\hline 
$d_k$ = 0 & 0.9543 & \multirow{2}{*}{0.0457} & \multirow{2}{*}{0.2088}\tabularnewline
$d_k$ = 1 & 0.0457 &  & \tabularnewline
\hline 
$d_e$ = 0 & 0.9543 & \multirow{2}{*}{0.2742} & \multirow{2}{*}{1.2531}\tabularnewline
$d_e$  = 1 & 0 &  & \tabularnewline
\hline 
$d_p$  = 0 & 0.9543 & \multirow{2}{*}{0.1119} & \multirow{2}{*}{0.5116}\tabularnewline
$d_p$  = 1 & 0 &  & \tabularnewline
\end{tabular}}
\\
\vskip 0.2in

 \resizebox{0.7\textwidth}{!}{%
\begin{tabular}{c|ccc}
6 leaves & Sample Proportion & Mean Distance & Standard Deviation\tabularnewline
\hline 
$d_k$= 0 & 0.9007 & \multirow{2}{*}{0.1025} & \multirow{2}{*}{0.3137}\tabularnewline
$d_k$ = 1 & 0.0961 &  & \tabularnewline
\hline 
$d_e$= 0 & 0.9007 & \multirow{2}{*}{0.8200} & \multirow{2}{*}{2.4899}\tabularnewline
$d_e$ = 1 & 0 &  & \tabularnewline
\hline 
$d_p$ = 0 & 0.9007 & \multirow{2}{*}{0.2869} & \multirow{2}{*}{0.8682}\tabularnewline
$d_p$ = 1 & 0 &  & \tabularnewline
\end{tabular}}
\\
\vskip 0.2in

 \resizebox{0.7\textwidth}{!}{%
\begin{tabular}{c|ccc}
7 leaves & Sample Proportion & Mean Distance & Standard Deviation\tabularnewline
\hline 
$d_k$ = 0 & 0.4824 & \multirow{2}{*}{0.6842} & \multirow{2}{*}{0.7420}\tabularnewline
$d_k$  = 1 & 0.3516 &  & \tabularnewline
\hline 
$d_e$  = 0 & 0.4824 & \multirow{2}{*}{6.5920} & \multirow{2}{*}{6.7687}\tabularnewline
$d_e$  = 1 & 0 &  & \tabularnewline
\hline 
$d_p$= 0 & 0.4824 & \multirow{2}{*}{1.9531} & \multirow{2}{*}{1.9685}\tabularnewline
$d_p$ = 1 & 0 &  & \tabularnewline
\end{tabular}}
\\
\vskip 0.2in

 \resizebox{0.7\textwidth}{!}{%
\begin{tabular}{c|ccc}
8 leaves & Sample Proportion & Mean Distance & Standard Deviation\tabularnewline
\hline 
$d_k$ = 0 & 0.0760 & \multirow{2}{*}{1.8490} & \multirow{2}{*}{0.9002}\tabularnewline
$d_k$ = 1 & 0.2639 &  & \tabularnewline
\hline 
$d_e$ = 0 & 0.0760 & \multirow{2}{*}{20.2859} & \multirow{2}{*}{9.0730}\tabularnewline
$d_e$ = 1 & 0 &  & \tabularnewline
\hline 
$d_p$= 0 & 0.0760 & \multirow{2}{*}{5.1175} & \multirow{2}{*}{2.0716}\tabularnewline
$d_p$ = 1 & 0 &  & \tabularnewline
\end{tabular}}
\caption{The
proportions of 0 and 1 distances in each of the three distances $d_e,
\, d_p, \, d_k$ for
the rooted trees with 
5 leaves, 6 leaves, 7 leaves and 8 leaves.}\label{sim:prob0}
\end{center}
\end{table}

\begin{figure}[!h]
\begin{center}
\includegraphics[scale=0.7]{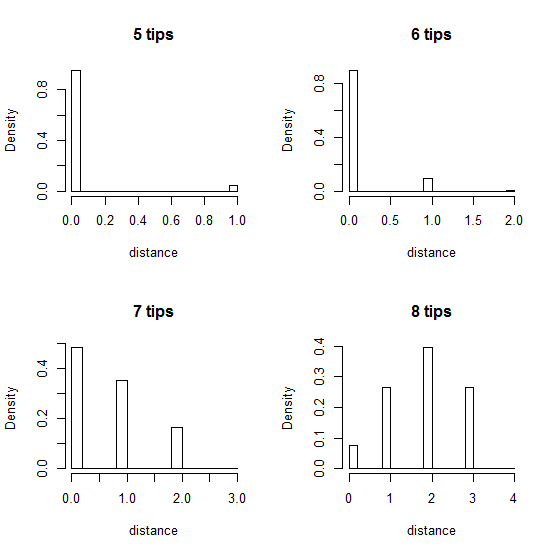}%
\end{center}
\caption{Histogram of $d_k$ for the caterpillar species tree and a
  random tree generated from the coalescent process}\label{sim_coal1}
\end{figure}

\begin{figure}[!h]
\begin{center}
\includegraphics[scale=0.7]{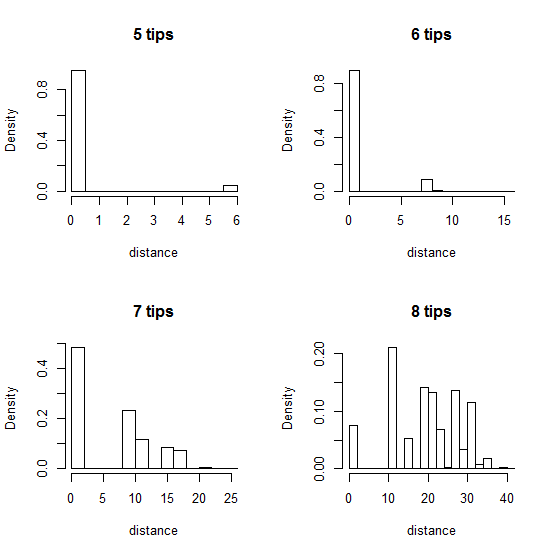}%
\end{center}
\caption{Histogram of $d_e$ for the caterpillar species tree and a
  random tree generated from the coalescent process}\label{sim_coal2}
\end{figure}

\begin{figure}[!h]
\begin{center}
\includegraphics[scale=0.7]{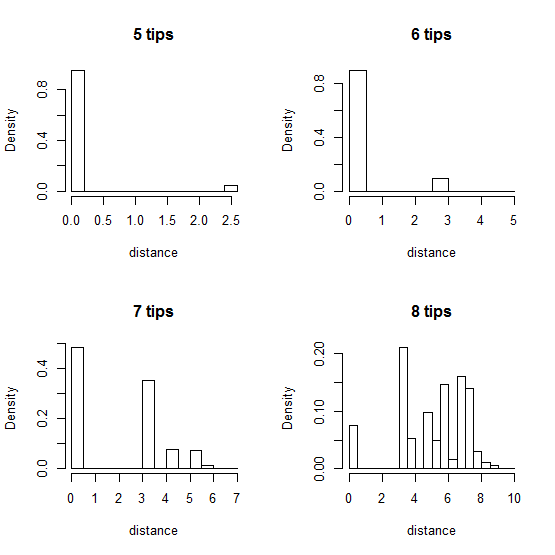}%
\end{center}
\caption{Histogram of $d_p$ for the caterpillar species tree and a
  random tree generated from the coalescent process}\label{sim_coal3}
\end{figure}

\section{Discussion}\label{dis}
While many tree distances measures between trees are hard to compute
(see Remark \ref{rm:np}) tree distances $d_e, \, d_p, \, d_k$ can be
computed in polynomial time in $n$.  Today,  we can generate
huge numbers of DNA sequences from genomes using new generation
sequencing techniques and
they can generate tens of millions base pairs of DNA sequences. In
order to conduct phylogenomics analysis on genome data sets we need
fast tree distances, such as  $d_e, \, d_p, \, d_k$.  However, in
order to understand statistical phylogenomics analysis on genome data
sets with thesis tree distances, we have to understand distribution of
these distances.  

In this paper we have shown some theoretical and simulation results on
the distributions of tree distances $d_e, \, d_p, \, d_k$ between
unrooted random trees with $n$ leaves and between the caterpillar
species tree and a random rooted gene tree with $n$ leaves generated from the
coalescent process with the species tree.  

The distributions of tree distances $d_e, \, d_p, \, d_k$ between
unrooted random trees with $n$ leaves seem to be symmetric and we have
conducted some goodness of fit test with the Gaussian distribution.
However, the null hypothesis (the distribution fits with the Gaussian
distribution) seems to be rejected (with the number of trees equals to
$10,000$), so it would be interesting and useful to know the
asymptotic distributions of  $d_e, \, d_p, \, d_k$ between
unrooted random trees with $n$ leaves.


In Theorem \ref{thm:probd0}, we have shown explicitly the probability of the tree distance $d_e,
\, d_p, \, d_k$ between caterpillar species
tree with $n$ leaves and a random gene tree with $n$ leaves
distributed with the coalescent process with the species tree equals
to zero. Note here the species tree is assumed to be caterpillar because $d_k$ between two trees can reach its upper bound only if one of them is caterpillar. Figure \ref{sim_coal1}, Figure \ref{sim_coal2} and Figure \ref{sim_coal3} show us that when the sizes of trees get larger, the centers and variation of non-zero distances also become larger, but zero is the only distance value that always guarantee a positive probability for all three types of distances. We are also interested in the computing the probability of $d_k$ being one, which is generally zero for $d_e$ and $d_p$ (see Table \ref{sim:prob0}).
However we do not know many aspects of the tree distance $d$ (one of the
distances $d_e, \, d_p, \, d_k$) between them as $n \to \infty$.
Thus, we have the following questions.
\begin{problem}
Consider the tree distances $d_e,
\, d_p, \, d_k$ between caterpillar species
tree with $n$ leaves and a random gene tree with $n$ leaves
distributed with the coalescent process with the species tree.  What
is the expectation of the tree distance $d$ (one of the
distances $d_e, \, d_p, \, d_k$) between them?  How about variance?
Can we say anything about the expectation asymptotically?
\end{problem}

\section{Acknowledgements}
The authors would like to thank the referees for very useful comments
to improve the manuscript.

\pagebreak
\bibliographystyle{decsci}

\bibliography{gene}
\end{document}